\def\lefttag#1{\tag*{\makebox[0pt][l]{\hspace*{-\columnwidth}#1}}}
\crefname{algocf}{Algorithm}{Algorithms}
\newtheorem{theorem}{Theorem}
\newtheorem{lemma}[theorem]{Lemma}
\newtheorem{proposition}[theorem]{Proposition}
\newtheorem{corollary}[theorem]{Corollary}
\theoremstyle{definition}
\newtheorem*{remark}{Remark}
\def\?#1{}
\def\whp{w.h.p\@ifnextchar-{.}{\@ifnextchar.{.\?}{\@ifnextchar,{.}{\@ifnextchar){.}{\@ifnextchar:{.:\?}{.\ }}}}}}
\def\Whp{W.h.p\@ifnextchar.{.\?}{\@ifnextchar,{.}{.\ }}}
\renewcommand\log{\ln}
\newcommand{\vX}{\vec X}
\renewcommand{\epsilon}{\eps}
\newcommand\PSI{\vec\psi}
\renewcommand{\vec}[1]{\boldsymbol{#1}}
\newcommand\SIGMA{\vec\sigma}
\newcommand\G{\mathcal{G}}
\newcommand\cE{\mathcal{E}}
\newcommand\cS{\mathcal{S}}
\newcommand\cX{\mathcal{X}}
\def\cR{{\mathcal R}}
\def\cE{{\mathcal E}}
\newcommand\vy{\vec y}
\newcommand\eps{\varepsilon}
\newcommand\ZZ{\mathbb{Z}}
\newcommand\NN{\mathbb{N}}
\newcommand\Erw{\mathbb{E}}
\newcommand{\vecone}{\vec{1}}
\newcommand{\Bin}{{\rm Bin}}
\newcommand\bc[1]{\left({#1}\right)}
\newcommand\cbc[1]{\left\{{#1}\right\}}
\newcommand{\bck}[1]{\left\langle{#1}\right\rangle}
\newcommand\brk[1]{\left\lbrack{#1}\right\rbrack}
\newcommand\scal[2]{\bck{{#1},{#2}}}
\newcommand\abs[1]{\left|{#1}\right|}
\newcommand\RR{\mathbb{R}}
\newcommand\pr{\mathbb{P}} 
\renewcommand\Pr{\pr}
\newcommand{\floor}[1]{\left\lfloor#1\right\rfloor}
\newcommand\mqgtada{m^\text{\normalfont\scriptsize BPD}_{\text{\normalfont\scriptsize seq}}}
\newcommand\mqgt{m^\text{\normalfont\scriptsize BPD}_{\text{\normalfont\scriptsize para}}}
 \def\G{{\vec G}}
\def\ex{{\mathbb E}}
\def\pr{{\mathbb P}}
\newcommand{\remove}[1]{}
\newcommand\mgreedy{m_{\text{\textsc{mn}}}}
\let\oldparagraph\paragraph
\def\paragraph#1{\oldparagraph*{#1.\expandafter\?}}
\begin{document}
\makeatletter
\newcommand{\linebreakand}{%
  \end{@IEEEauthorhalign}
  \hfill\mbox{}\par
  \mbox{}\hfill\begin{@IEEEauthorhalign}
}
\makeatother
\title{On the Parallel Reconstruction from Pooled Data\thanks{OG and PL were supported by DFG CO 646/3. MHK was supported by DFG FOR 2975 and Stiftung Polytechnische Gesellschaft.}}
\author{\IEEEauthorblockN{Oliver Gebhard}
\IEEEauthorblockA{\textit{TU Dortmund University}\\
Dortmund, Germany \\
oliver.gebhard@tu-dortmund.de}
\\
\IEEEauthorblockN{Dominik Kaaser}%
\IEEEauthorblockA{\textit{Universität Hamburg}\\
Hamburg, Germany \\
dominik.kaaser@uni-hamburg.de}
\and
\IEEEauthorblockN{Max Hahn-Klimroth}%
\IEEEauthorblockA{\textit{TU Dortmund University}\\
Dortmund, Germany \\
max.hahn-klimroth@cs.tu-dortmund.de}
\\
\IEEEauthorblockN{Philipp Loick}%
\IEEEauthorblockA{\textit{Goethe University Frankfurt}\\
Frankfurt, Germany \\
loick@math.uni-frankfurt.de}

}

\maketitle
\thispagestyle{plain}
\pagestyle{plain}

\begin{abstract}
In the pooled data problem the goal is to efficiently reconstruct a binary signal from additive measurements.
Given a signal $\SIGMA \in \cbc{0,1}^n$, we can query multiple entries at once and get the total number of non-zero entries in the query as a result.
We assume that queries are time-consuming and therefore focus on the setting where all queries are executed in parallel. 
For the regime where the signal is sparse such that $\abs{\abs{\SIGMA}}_1 = o(n)$ our results are twofold:
First, we propose and analyze a simple and efficient greedy reconstruction algorithm.
Secondly, we derive a sharp information-theoretic threshold for the minimum number of queries required to reconstruct $\SIGMA$ with high probability.
Our first result matches the performance guarantees of much more involved constructions (Karimi et al.\ 2019).
Our second result extends a result of  Alaoui et al.\ (2014) and Scarlett \& Cevher (2017) who studied the pooled data problem for dense signals.
Finally, our theoretical findings are complemented with empirical simulations.
Our data not only confirm the information-theoretic thresholds but also hint at the practical applicability of our pooling scheme and the simple greedy reconstruction algorithm.
\end{abstract}

\begin{IEEEkeywords}
Reconstruction, Sparse Signal, Pooled Data, Information Theory, Phase Transitions
\end{IEEEkeywords}

\section{Introduction}
We consider the \emph{binary pooled data problem with additive queries} which is defined as follows.
We are given a \emph{signal} of length $n$, a large vector $\SIGMA \in \cbc{0,1}^n$ of Hamming weight $k$ and a querying method.
Each query pools multiple entries of $\SIGMA$ together and returns the exact number of non-zero entries contained in the pool  (see \cref{fig:example} for an example).
The goal is to reconstruct $\SIGMA$ using as few queries as possible.

In many real-world scenarios the time to compute a reconstruction of $\SIGMA$ is dominated by the time to perform a single query.
The evaluation of such a query may require, e.g., computations using a deep neural network on a GPU \cite{liang2021neural}, biological processes such as DNA screening \cite{cao_2014, sham_2002}, or PCR tests in a bio-medical context \cite{BENAMI20201248}.
To obtain a substantial speed-up, we therefore focus on \emph{parallel} schemes where all queries are specified a priori and executed simultaneously.
{This assumption makes sense in the context of a life sciences laboratory: queries can be envisioned as measurements conducted by a liquid handling robot.
The time to perform all (parallel) queries then clearly dominates the time to run an efficient (sequential) reconstruction algorithm (for practical input sizes).
}

In this paper we focus on the \emph{sublinear regime} where the number of non-zero entries $k$ scales sub-linearly in the signal's length $n$ such that $k=n^{\theta}$ for some $\theta < 1$. 
In this setting, our main task is to specify a suitable parallel pooling design and an efficient reconstruction algorithm that allows us to compute $\SIGMA$ efficiently from the queried data.
We are interested in two different types of \emph{phase-transitions} that commonly arise in the analysis of reconstruction and statistical inference problems:
\begin{enumerate}
    \item What is the minimum number of queries that allows us to infer $\SIGMA$ from the query results given unlimited computational power?
    \item How many queries are required such that an efficient algorithm can compute $\SIGMA$ from the query results?
\end{enumerate}
\widowpenalty0
\clubpenalty0
We will refer to the first phase-transition as the \emph{information-theoretic threshold} and to the second phase-transition as the \emph{algorithmic threshold}.

\begin{figure}[t]
\centering
\begin{tikzpicture}[scale=1]
\node[circle, draw, minimum width=0.75cm, fill=lightgray] (x0) at (0, 0) {$\SIGMA_1$};
\node[circle, draw, minimum width=0.75cm, fill=lightgray] (x1) at (1,0) {$\SIGMA_2$};
\node[circle, draw, minimum width=0.75cm] (x2) at (2, 0) {$\SIGMA_3$};
\node[circle, draw, minimum width=0.75cm] (x3) at (3, 0) {$\SIGMA_4$};
\node[circle, draw, minimum width=0.75cm, fill=lightgray] (x4) at (4, 0) {$\SIGMA_5$}; 
\node[circle, draw, minimum width=0.75cm] (x5) at (5, 0) {$\SIGMA_6$};
\node[circle, draw, minimum width=0.75cm] (x6) at (6, 0) {$\SIGMA_7$};

\node[rectangle, draw, minimum width=0.75cm, minimum height=0.75cm] (a1) at (1, -2) {$2$};
\node[rectangle, draw, minimum width=0.75cm, minimum height=0.75cm] (a2) at (2, -2) {$2$};
\node[rectangle, draw, minimum width=0.75cm, minimum height=0.75cm] (a3) at (3, -2) {$3$};
\node[rectangle, draw, minimum width=0.75cm, minimum height=0.75cm] (a4) at (4, -2) {$1$};
\node[rectangle, draw, minimum width=0.75cm, minimum height=0.75cm] (a5) at (5, -2) {$1$};

\path[draw] (x0) -- (a1);
\path[draw] (x0) -- (a2);
\path[draw] (x0) -- (a3);
\path[draw] (x1) -- (a1);
\path[-] (x1) edge [bend left=5,dashed] (a3);
\path[-] (x1) edge [bend right=5,dashed] (a3);
\path[draw] (x2) -- (a1);
\path[draw] (x2) -- (a3);
\path[draw] (x2) -- (a2);
\path[draw] (x3) -- (a3);
\path[draw] (x3) -- (a4);
\path[draw] (x3) -- (a5);
\path[draw] (x4) -- (a2);
\path[draw] (x4) -- (a5);
\path[draw] (x4) -- (a4);
\path[draw] (x5) -- (a2);
\path[-] (x5) edge [bend left=5,dashed] (a4);
\path[-] (x5) edge [bend right=5,dashed] (a4);
\path[draw] (x6) -- (a3);
\path[draw] (x6) -- (a5);
\path[draw] (x6) -- (a4);
\end{tikzpicture}
\caption{A small example with signal $\SIGMA = (1,1,0,0,1,0,0) \in \cbc{0,1}^7$ at the top and queries $a_1,\dots,a_5$ at the bottom. The edges of the bipartite (multi-) graph $\G$ show which entries are contained in a specific query. The dashed lines highlight the occurrence of multi-edges. The goal is to reconstruct $\SIGMA$ given only $\G$ and the query results $(2, 2, 3, 1, 1)$.}
\label{fig:example}
\end{figure}

\subsection{The Teacher-Student Model}
As in many related reconstruction problems, the teacher-student model provides the fundamental means towards analyzing information-theoretic questions. 
The challenge in such reconstruction problems lies in deriving probability distributions that are dependent on a variety of random variables and hard to express per se. However, deriving probability distributions conditioned on certain high-probability events is feasible. 
For an introduction and mathematical justification of the model, we refer the reader to~\cite{aco_2018}.
The setup is the following: a teacher aims to convey some \emph{ground truth} to a student. Rather than directly providing the ground truth to the student, the teacher generates observable data from the ground truth via some statistical model and passes both the data and the model to the student. The student now aims to infer the ground truth from the observed data and the model.

In terms of this paper we see $\SIGMA$ as the ground truth. Its distribution is inherited from all vectors in $\cbc{0,1}^n$ of Hamming weight $k$. The observable data $\vec{y}$, together with the conducted queries (expressed as a graph $\G$) are passed to the student in order to infer $\SIGMA$.
In the following, we analyze the chances of the student to infer the ground truth from the observable data. First, we derive the model distribution from the provided information $\G$ and the query results $\vec{y}$. Afterwards, we use the gained knowledge to analyze the chances of the student to recover the ground truth by estimating the number of possible input vectors that are consistent with the observed query results. 
As our goal is to recover $\SIGMA$ with high probability, we condition on the event that the underlying bipartite multi-graph $\G$, which will be defined properly in due course, behaves almost \emph{as expected}. We exploit the knowledge about $\G$ to derive high-probability events which we can condition on. Eventually, our analysis conveys the information whether there is a unique input vector or multiple possible input vectors out of which the student has to guess the correct one.

\subsection{Related Work}
The binary pooled data problem, sometimes called \emph{quantitative group testing}, finds its roots in early works of Dorfman \cite{dorfman_1943}, Djackov \cite{djackov_1975}, and Shapiro \cite{shapiro_1960}.
It has recently gained a lot of interest in the literature
\cite{alaoui_2017, bshouty_2009,  feige2020quantitative, karimi_2019_2, scarlett_2017}, with applications in a multitude of disciplines such as DNA screening \cite{sham_2002}, identifying genetic carriers \cite{cao_2014} and machine learning \cite{liang2021neural, martins_2014, NIPS2014_fb8feff2}.
Variants of the problem include binary group testing \cite{AJS_book, coja_spiv} or threshold group testing \cite{Chan2013, DeMarco2020}.
We start our discussion with an overview of related work from information theory.

\paragraph{Information-Theoretic Aspects}
A simple information-theoretic lower bound can be obtained by a folklore counting argument: each query returns a number from $0$ to $k$, thus a pooling design with $m$ queries can produce at most $(k+1)^m$ different outcomes. This number must be larger than $\binom{n}{k}$ in order to distinguish all possible input vectors of length $n$ with Hamming weight $k$. By standard asymptotic bounds, we obtain
\begin{align}
    \mqgtada \geq \bc{1 - o(1)} \frac{\log \frac{n}{k}}{\log k} k. 
\end{align}
The universal lower bound on $\mqgtada$ holds in any case, even if the queries do not need to be conducted in parallel. Restricted to the important special case in which all queries are conducted in parallel, \cite{djackov_1975} shows that reconstruction of $\SIGMA$ requires at least 
\begin{align}\label{lowwer_seq}
    \mqgt = (2 - o(1)) \frac{\log \frac{n}{k}}{\log k} k = 2 \cdot \mqgtada
\end{align}
queries, even with unlimited computational power. On the positive side, Bshouty \cite{bshouty_2009} proves that reconstruction of $\SIGMA$ is efficiently possible with $(2 + \epsilon)\mqgtada$ queries if they are conducted sequentially and Grebinski and Kucherov \cite{grebinski_2000} provide a parallelizable design with an exponential-time reconstruction decoding algorithm which guarantees inference with $(2 + \epsilon) \mqgt$ queries using \emph{separating matrices}. {The latter positive result was extended to the so-called \emph{Subset Select problem} \cite{Marco_2013}, a relaxation of the pooled data problem that asks to identify only a subset of positive entries correctly. Recently, \cite{feige2020quantitative} improved the result for this relaxation by a factor of $2$.}
So far, these results hold independently of $k$. For the linear regime where $k = \Theta(n)$, much stricter results are already known: Alaoui et al.\ \cite{alaoui_2017} and Scarlett and Cevher \cite{scarlett_2017} show that there is an exponential-time construction that achieves reconstruction with $(1 + \epsilon)\mqgt$ parallel queries -- a result that is dependent on $k$ scaling linearly in $n$.

\paragraph{Algorithmic Aspects}
If allowed for sequential queries, Bshouty \cite{bshouty_2009} presents an efficient reconstruction algorithm that succeeds at recovery of $\SIGMA$ with no more than $(2 + o(1))\mqgtada$ queries.
However, for parallel schemes, there are significant gaps between the information-theoretic lower bound and the currently best known efficient algorithms \cite{alaoui_2017, donoho_l1, feige2020quantitative, Foucart2013, karimi_2019, OMP}. 
For instance, Alaoui et al.\ \cite{alaoui_2017} present an \emph{Approximate Message Passing} algorithm for dense signals ($k = \Theta(n)$).
Furthermore, Donoho and Tanner \cite{donoho_l1} give a decoding strategy based on $\ell_1$-minimization, and Foucart and Rauhut \cite{Foucart2013} introduce the \emph{Basis Pursuit}-algorithm. They can be used to recover $\SIGMA$ with
\[ (2  + o(1)) k \log \frac{n}{k}  \quad \text{and} \quad  (2 + o(1)) k \log n \sim \frac{2}{1 - \theta} k \log \frac{n}{k} \]
queries, respectively, if the signal is sparse ($k \ll n$). Note that these algorithms solve the more general compressed sensing problem. Various improvements over the Basis Pursuit algorithm are known (e.g., the Orthogonal Matching Pursuit \cite{OMP} and its improved version for discrete signals \cite{OMP_disc}) but as Wang and Yin \cite{Wang2010} discuss, they do not perform asymptotically better in the setting discussed in this paper.
More recent algorithms explicitly designed for recovery of $\SIGMA$ from additive queries in the sparse regime are due to Karimi et al.\ \cite{karimi_2019_2,karimi_2019}. They provide two algorithms based on graph codes that require  
 \[ (1.72 + o(1)) k \log \frac{n}{k} \quad \text{and} \quad (1.515 + o(1)) k \log \frac{n}{k} \]
queries, respectively.
Furthermore, in a yet unpublished draft that appeared subsequently to our work on arXiv, Feige and Lellouche \cite{feige2020quantitative} analyze the Subset Select problem. They prove that, under mild assumptions, an algorithm succeeding at this relaxation can be turned into an algorithm for recovery of $\SIGMA$ without significantly increasing the required number of queries.

\subsection{Our Contributions}
We study the pooled data problem under the random regular model $\G$ which is known to be information-theoretically optimal in the linear regime as well as in similar inference problems \cite{coja_spiv}.
More precisely, we let $\G = (V \cup F, E)$ be a random bipartite multi-graph with \emph{query-nodes} $F = \{a_1, \ldots, a_m\}$ representing the queries, \emph{entry-nodes} $V = \{x_1, \ldots, x_n\}$ representing the coordinates of $\SIGMA$, and edges $E$ indicating how often a specific entry is contained in a given query. Hereby, each query $a_i \in F$ contains exactly 
$\Gamma = n/2$ entries chosen uniformly at random with replacement. 

\paragraph{Algorithmic Results}
For the aforementioned pooling design we present a fairly intuitive greedy algorithm called \emph{Maximum Neighborhood (MN) Algorithm} that allows reconstruction of $\SIGMA$ \whp.\footnote{%
The expression \emph{with high probability (w.h.p.)} refers to a probability that tends to 1 as $n \rightarrow \infty$.%
} It follows a thresholding approach that is much simpler than the known algorithms by Karimi et al.\ \cite{karimi_2019_2,karimi_2019}, which are technically highly challenging. A formal definition of the MN-Algorithm is given in \cref{MN_algorithm}.

\begin{algorithm}[ht]

\DontPrintSemicolon
\SetAlgoVlined
\SetKwFor{ForParallel}{for}{do in parallel}{end}
\SetKw{KwTo}{to}
\SetKwFunction{Query}{query}

\KwData{$m$, $k$, querying method \Query}
\KwResult{estimation $\Tilde{\SIGMA}$ for $\SIGMA$.}
\medskip
\ForParallel{$i = 1$ \KwTo $m$}{
    sample a multiset $a_i$ of size $\Gamma$ from $[n]$\;
    compute $\vec{y}_i \gets $ \Query{$a_i$}\;
    \tcp{\normalfont\small The query method guarantees that $\vec{y}_i = \sum_{j \in a_i} \SIGMA(j)$.}
}
\medskip
\For{$i = 1$ \KwTo $n$}{
calculate $\Psi_i \gets \sum_{j = 1}^m \vecone \cbc{i \in a_j} \cdot \vec{y}_j$\;
calculate $\vec \Delta^\star_i \gets \sum_{j = 1}^m \vecone \cbc{i \in a_j}$\;
}

sort coordinates of $\Tilde{\SIGMA}$ in decreasing order by $\Psi_i - \vec \Delta^\star_i \frac{k}{2}$\;
set $\Tilde{\SIGMA}$ to $1$ for the first $k$ (sorted) coordinates\;
set $\Tilde{\SIGMA}$ to $0$ for the remaining $n-k$ (sorted) coordinates\;
\caption{The Maximum Neighborhood Algorithm}
\label{MN_algorithm}
\end{algorithm}

On an intuitive level, the MN-Algorithm works as follows.
First, we query $m$ times exactly $\Gamma$ randomly chosen entries of the signal in parallel, which yields the graph representation $\G$.
Secondly, we sum up the query results $(\vec y_a)_{a \in \partial x}$ in the neighborhood induced by $\G$ of each coordinate, counting multi-edges only once.
The sum is then centralized by its expected value.
Finally, those coordinates with a large \emph{score} are very likely to have the value $1$ under $\SIGMA$.
Our first main theorem states how many parallel queries are required for the MN-Algorithm to recover the correct $\SIGMA$ \whp.

\begin{theorem}
\label{Thm_alg}
Suppose that $0<\theta < 1$, $k = n^\theta$, and $\epsilon>0$ and let
\begin{align*}
     \mgreedy(n, \theta) = 4 \left(1-\frac{1}{\sqrt{e}}\right) \frac{1 + \sqrt{\theta}}{1 - \sqrt{\theta}} k \log(n/k).
\end{align*}
If $m>(1+\epsilon) \mgreedy(n,\theta)$, then \cref{MN_algorithm} outputs $\SIGMA$ \whp on input $m$ and $k$ and an additive querying method \texttt{query} that returns the total number of one-entries in a query.
\end{theorem}
While the MN-algorithm takes $k$ as an input, the proof reveals that prior knowledge of $k$ is not required in detail. More precisely, a lower bound on $k$ suffices, as in this case enough queries are conducted and the design of $\G$ is independent from $k$. Observe that one additional parallel query on all entries reveals the exact value of $k$ immediately without increasing $m$ asymptotically and therefore the only dependence on $k$ in \cref{MN_algorithm} (Line 7) can be easily removed by this one additional query. Beside not being strictly dependent on $k$, a main novelty of the MN-algorithm is its greedy fashion, providing a straightforward approach compared to the technically challenging algorithms presented in \cite{karimi_2019_2,karimi_2019}.

\paragraph{Parallelized Reconstruction}
Observe that our reconstruction algorithm, apart from sampling the test design and performing all queries in parallel, is specified in a sequential fashion.
This emphasizes the local structure of the reconstruction algorithm.
In the context of a parallel computation we observe that our algorithm can be readily parallelized.
When individual queries can be conducted much faster, this further reduces the overall running time of our approach.
Such improved reconstruction algorithms can be used in the context of machine learning, see, e.g., \cite{NIPS2014_fb8feff2} for an application.

Recall that our test design is described by a random bipartite graph $\G$ and let $M = M(G) = ( m_{ij} ) \in \{0,1\}^{n\times m}$ be the unweighted biadjacency matrix of $G$.
Intuitively, the entries of $M$ are those values that are summed up in Line $6$ of \cref{MN_algorithm}.
It follows that the $\Psi_i$ and $\vec \Delta^\star_i$ vectors are matrix-vector products $\vec \Delta^* = M 1$ and $\Psi = M \vec y$ where $1 = (1, \dots, 1)$ is the all-one-vector and $\vec y$ is the query result vector.
The sums computed in Lines 4 to 6 of \cref{MN_algorithm} can therefore be expressed in terms of two matrix-vector products for which efficient parallelizations are known.
Finally, in Lines 7 to 9 of \cref{MN_algorithm} the (coordinates of) the resulting vector are sorted.
See \cite{DBLP:journals/ijpp/SinghJC18} for a rather recent survey (with a focus on but not limited to GPUs) on parallel sorting algorithms.
\color{black}

\paragraph{Information-Theoretic Results}
We prove that in the sublinear regime where $k = n^\theta$ for some $\theta \in (0,1)$ it is possible to reconstruct $\SIGMA$ from $(\G, \vec y)$ with high probability with no more than $(1 + \epsilon) \mqgt$ parallel queries for some arbitrarily small $\eps > 0$. More precisely, we show that there is, with high probability, no second input vector $\tau \in \{0,1\}^n$ leading to the same sequence of query results. 
\begin{theorem}\label{Thm_inf}
Suppose that $0<\theta< 1$, $k = n^\theta$, and $\epsilon>0$ and let
\begin{align*}
 \mqgt &= 2 \frac{k \log(n/k)}{\log k} = 2 \frac{1 - \theta}{\theta} k.
\end{align*}
If $m> (1+\epsilon)\mqgt$, $\SIGMA$ can be computed from $\G$ and $\vec{y}$ \whp.
\end{theorem}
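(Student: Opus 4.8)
The plan is to prove this achievability statement by a \emph{first moment} (union bound) argument showing that, \whp, the planted vector $\SIGMA$ is the \emph{only} weight-$k$ configuration consistent with the observed test results. Once such uniqueness is established, an exhaustive search over all $\binom{n}{k}$ weight-$k$ vectors recovers $\SIGMA$, which is all that is required from the information-theoretic (unbounded-computation) perspective. Since $\SIGMA\in S_k(\G,\vec y)$ always holds, it suffices to show $\Erw\brk{\,|S_k(\G,\vec y)|-1\,}\to0$ and then invoke Markov's inequality.

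To set up the first moment I would parametrize every competing solution $\sigma\neq\SIGMA$ of Hamming weight $k$ by its overlap with $\SIGMA$. Because $\sigma$ and $\SIGMA$ carry the same number of ones, $\sigma$ is obtained from $\SIGMA$ by switching off a set $A\subseteq V_1$ of $\ell$ infected individuals and switching on a set $B\subseteq V_0$ of $\ell$ healthy ones, for some $1\le\ell\le k$; there are $\binom{k}{\ell}\binom{n-k}{\ell}$ such pairs $(A,B)$. The configuration $\sigma=\sigma_{A,B}$ lies in $S_k(\G,\vec y)$ if and only if in every test the number of occurrences of $A$-individuals equals that of $B$-individuals (counted with multiplicity). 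As the $m$ tests are sampled independently, this yields
\begin{align*}
\Erw\brk{\,|S_k(\G,\vec y)|-1\,}=\sum_{\ell=1}^k\binom{k}{\ell}\binom{n-k}{\ell}\, p_\ell^{\,m},
\end{align*}
where $p_\ell$ is the probability that a single test of size $\Gamma=n/2$ is \emph{balanced} in the above sense.

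The crux is a sharp estimate of $p_\ell$. Writing $X$ and $Y$ for the number of slots of a random test that land in $A$ and in $B$ respectively, each of the $\Gamma$ slots contributes $+1$, $-1$, or $0$ to $X-Y$ with probabilities $\ell/n$, $\ell/n$, $1-2\ell/n$, so $X-Y$ is a centred lattice random walk of variance $2\Gamma\ell/n=\ell$ (here the choice $\Gamma=n/2$ is convenient). A local central limit theorem then gives $p_\ell=\Pr\brk{X-Y=0}=\Theta(\ell^{-1/2})$, uniformly in $\ell$, with the small-$\ell$ cases handled directly, where $p_\ell$ is bounded by an absolute constant below $1$. I expect this local-limit estimate to be the main technical obstacle, and it is exactly the exponent $\tfrac12$ in $p_\ell^{\,m}=\ell^{-m/2+o(m)}$ that produces the constant $2$ in $\minf$; a cruder constant bound on $p_\ell$ would not pin down the sharp threshold.

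It remains to show the sum vanishes when $m>(1+\eps)\minf$. Substituting $\binom{k}{\ell}\le(\eul k/\ell)^\ell$, $\binom{n-k}{\ell}\le(\eul n/\ell)^\ell$ and $p_\ell^{\,m}=\ell^{-m/2+o(m)}$, and writing $\ell=\alpha k$, a short computation shows the exponential order in $k\log n$ of the $\ell$-th summand equals $(1-\theta)\bc{\alpha-1-\eps}$ to leading order, using $\log(n/k)=(1-\theta)\log n$, $\log k=\theta\log n$ and $m=(1+\eps)\tfrac{2(1-\theta)}{\theta}k$. This is maximised at $\alpha=1$, i.e.\ at $\ell=k$ (the solutions disjoint from $\SIGMA$), with value $-\eps(1-\theta)<0$, and is uniformly negative on $(0,1]$; the very small-$\ell$ terms decay even faster since there the factor $p_\ell^{\,m}=\exp\bc{-\Omega(m)}$ dominates the polynomial entropy factor. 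Hence the whole sum is at most $\mathrm{poly}(n)\cdot\exp\bc{-\Omega(k\log n)}\to0$, so $|S_k(\G,\vec y)|=1$ \whp, and $\SIGMA$ is identifiable, completing the plan.
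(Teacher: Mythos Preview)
Your approach is correct and follows the same first-moment strategy as the paper: parametrize competing solutions by their overlap with $\SIGMA$, estimate the per-test matching probability $p_\ell$ via a local limit theorem (this is exactly the paper's random-walk return probability in \cref{small_overlap_part_function}), and show the expected number of spurious solutions vanishes once $m>(1+\eps)\minf$. Your parametrization by the number of \emph{switches} is the paper's $k-\ell$, and your leading-order optimization at $\alpha=1$ agrees with the paper's maximum at overlap $k^2/n$ (since $k^2/n=o(k)$).

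The one genuine difference is how the very-close-to-$\SIGMA$ configurations are treated. The paper's LCLT-type approximation requires the expected number of switch-slots per test, $(k-\ell)\Gamma/n$, to diverge, so it only covers switch counts at least $\log k$; the remaining range is handled by a separate coupon-collector argument (\cref{Prop_big_overlap}). You instead observe that for any bounded number of switches $p_\ell$ is a constant strictly below~$1$, so $p_\ell^{\,m}=\exp(-\Omega(m))=\exp(-\Omega(k))$ kills the entropy factor $\binom{k}{\ell}\binom{n-k}{\ell}$ outright. This works and is the simpler route; it avoids the second proposition altogether. Two small points to tighten: the entropy factor for, say, $\ell=\log k$ switches is $\exp(O(\log^2 n))$, not literally polynomial (but still $o(\exp(k))$, so your conclusion stands); and the claim $p_\ell=\Theta(\ell^{-1/2})$ \emph{uniformly} in $\ell$ deserves one sentence of justification---it holds because for $\ell\to\infty$ the LCLT gives $p_\ell\sim(2\pi\ell)^{-1/2}$, while for bounded $\ell$ both $p_\ell$ and $\ell^{-1/2}$ are trapped between fixed positive constants.
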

Our result reduces the previously known upper bound of Grebinski and Kucherov \cite{grebinski_2000} by a factor of two and we provide the missing counter part of \eqref{lowwer_seq} which establishes the existence of a phase-transition at $\mqgt$ for parallel designs.

\subsection{Discussion}
Our results extend information-theoretic results of Alaoui et al.\ \cite{alaoui_2017} from the linear regime to the sublinear regime. For $\theta \to 1$, our threshold of \cref{Thm_alg} turns out to converge towards the threshold of \cite{alaoui_2017}.
The study of the sublinear regime is inspired by studies of the compressed sensing problem with a sparse underlying signal \cite{arjoune_2017}.
In the special case of the binary pooled data problem, those studies were initiated by \cite{karimi_2019}.
The sparse regime is indeed interesting in real-world applications, with examples including epidemiology where Heaps law models the early spread of pandemics \cite{benz_2008,wang_2011} or the detection of rare features in image classification in machine learning \cite{liang2021neural}.
{The relevance of the sublinear regime can be seen in the following example. Suppose a screening for HIV is conducted. Out of about 67,220,000 residents of the UK, 105,200 are known to be infected with the HI virus. Hence, by screening n = 10.000 random probes, we expect 16 positive entries in the signal corresponding to the infection status. Thus, the choice $\theta = 0.3$ describes the situation quite well.}

It is not surprising that also similar problems have been recently analyzed in the sublinear regime.
By now, a vast body of related literature exists (see, e.g., the survey by Aldridge et al.\ \cite{AJS_book}).
Interestingly, for the (presumably more difficult) variant in which a query only returns the information whether at least one non-zero entry was found, a very sophisticated efficient algorithm is known for $\theta \leq \log 2 / (1 + \log 2) \approx 0.409$ which requires $m_{GT} \sim \log^{-1}(2) k \log \frac{n}{k}$ parallel queries \cite{coja_spiv}. Thus, dropping most of the available information and using this approach outperforms not only the simple greedy approach discussed in this paper for small values of $\theta$, but also the quite involved algorithms by Karimi et al.\ \cite{karimi_2019_2, karimi_2019}. This result is of fundamental theoretical interest, since it solves an open complexity theoretical question. Nevertheless, their proposed algorithm appears to be of rather limited interest for practical applications, as it requires, e.g., that $\sqrt{\log \log n}$ is large.
This is in contrast to our simple greedy scheme, which our simulations have shown to work well for real-world input sizes.

As in state-of-the art designs for similar reconstruction problems \cite{AJS_book, coja_spiv}, we allow a specific entry to be included multiple times in one query. While this seems counter-intuitive in the first place, it does not affect practicability of the proposed design.

\section{Model and Notation} \label{sec:outline}
In this section we formally introduce the pooling design. As before, $\SIGMA \in \cbc{0,1}^n$ is the ground truth chosen uniformly at random from all $0-1$ vectors of length $n$ with exactly $k$ non-zero entries, where $k = n^{\theta}$ for some $\theta \in (0,1)$.
We use $\G=\G(n,m,\Delta)$ to denote the random bipartite multi-graph that models the pooling design, where $m$ denotes the total number of queries and  $\vec \Delta=\{\vec \Delta_1, \dots, \vec \Delta_n\}$ describes the number of queries each individual participates in.
Observe that $\vec{\Delta}_i \sim \Bin(mn/2,1/n)$.
Similarly, we let $\vec \Delta^{\star}=\{\vec \Delta^{\star}_1, \dots, \vec \Delta^{\star}_n\}$ denote the number of \emph{distinct} queries with expected value $\Erw \brk{ \vec \Delta_i^\star } = (1 - \exp(-1/2))m$.
We let the vector $\vec{y}\in\cbc{0, \dots, \Gamma}^m$ denote the sequence of query results. When we refer to any other input vector than $\SIGMA$, we simply write $\sigma$ for the input vector and $y=y(\G, \sigma)$ for the corresponding results' vector.
Additionally, we write $V=\cbc{x_1,\ldots,x_n}$ for the set of the $n$ entries of $\SIGMA$ and let
$V_0=\cbc{x_i\in V:\SIGMA(i)=0}$ and $V_1=V\setminus V_0$ be the set of entries with value $0$ and $1$, respectively.
For $x_i\in V$, we write $\partial x_i$ for the multiset of queries $a_j$ in which $x_i$ is contained.  Similarly, we write $\partial^{\star} x_i$ for the set of \emph{distinct} such queries.
Analogously, for a query $a_i$, we denote by $\partial a_i$ the multiset of entries that are contained.

Recall that in our model every query contains exactly $\Gamma=n/2$ entries, and those entries are assigned uniformly at random with replacement. If a one-entry $x_i$ participates in a query $a_j$ more than once, it increases $\vec{y}_j$ multiple times.
For each $x_i \in V$, we let $\Psi_i$ be the sum of its query results for \emph{distinct} queries it belongs to. That is, even if the entry appears more than once in a query and thus contributes to the result multiple times, this query's result contributes to $\Psi_i$ only once.
Of course, the value of $x_i$ under $\SIGMA$ has a significant impact on this sum, increasing it by $\vec \Delta_i$, if $x_i$ is non-zero. To account for this effect in our analysis, we introduce a second variable $\Phi_i$ that sums all the query results in which $x_i$ is contained and excludes the impact of $x_i$. Formally, for any configuration $\sigma \in \{0,1\}^n$ we define
\begin{align*} 
\Psi_i(\sigma) = \sum_{j \in \partial^{\star} x_i} y_{a_j} \quad \text{and} \quad \Phi_i(\sigma) = \Psi_i(\sigma) - \vecone{\{\sigma(i)=1\} }\Delta_i
\end{align*}
and let $\Psi=(\Psi_1,\dots,\Psi_n)$ and $\Phi=(\Phi_1,\dots,\Phi_n)$. When we consider a specific instance $(\G, \vec{y})$, we will write $\vec{\Psi}_i = \Psi_i(\SIGMA)$ and $\vec{\Phi}_i = \Phi_i(\SIGMA)$ for the sake of brevity. Notably, while $\vec{\Psi}_i$ is known to the observer or an algorithm instantly from the queries, $\vec{\Phi}_i$ is not, since the ground truth $\SIGMA$ itself is unknown.

To express the number $m$ of queries conducted, we let $c(n) > 0$ denote a positive function from $\mathbb{N}$ to $\RR^+$ such that \begin{align*}
m &= c(n) k \frac{\log(n/k)}{\log k} .
\end{align*}
While it turns out that $c(n)=\Theta(1)$ suffices in the analysis of the information-theoretic bound, we will see that the performance guarantee of the MN-algorithm requires $c(n)$ to scale as $\Theta(\log n)$. 
Finally, we define a high probability event $\cR$ that we will condition on as explained in the teacher-student model. 
Let $\cR$ be the event that, for all $i \in [n]$, we have
\begin{equation}\label{def_R} 
\begin{aligned}
    \vec{\Delta}_i &= \frac m2 + O\bc{\sqrt{m} \log n} \\
    \text{and}\quad \vec{\Delta}_i^{\star} &= \bc{1-\exp \bc{-1/2}}m + O\bc{\sqrt{m} \log n}, 
\end{aligned}
\end{equation}
meaning that the underlying random graph satisfies concentration properties.
The following lemma states that $\cR$ is indeed a high probability event.
\begin{lemma}
\label{lem_event_R_high_prob}
If $\G$ is constructed according to our pooling scheme, then $\Pr(\cR) = 1 - o(1)$.
\end{lemma}
The proof follows from standard concentration results, see the appendix for the technical details.
Since \cref{Thm_inf,Thm_alg} only contain \whp-assertions, we can safely condition on $\cR$ for the remainder of our analysis.

\section{MN-Algorithm}
\paragraph{Outline}\label{perf_guarantees}
Recall that $\Psi_i$ is the sum over all query results in which the entry $x_i$ is contained (multi-edges counted only once) and $\vec \Delta^\star_i$ is the (random) number of disjoint such queries. Furthermore, let $\cE_j$ be the $\sigma-$algebra generated by the edges connected with $x_j$. As already discussed, we get 
\begin{equation*}\vec \Delta^\star_i = (1 + o(1)) \bc{1 - \exp \bc{-1/2}}m\end{equation*} 
\whp. Therefore, intuitively spoken, a non-zero entry $x_i$ increases the value of $\vec \Psi_i$ by $\vec \Delta_i = (1 + o(1))m/2$, other than zero-entries. Moreover, by construction of the random bipartite (multi-)graph $\G$, we get that the second neighborhood of $x_i$ contains $\Bin \bc{ \Gamma \vec \Delta^\star_i, k/n }$ non-zero entries. Thus we expect 
\[ \Erw \brk{ \Psi_i - \vec \Delta^\star_i \frac{k}{2} \Big| \cE_i} = \vecone {\cbc{ \SIGMA(i) = 1 }} \vec \Delta_i. \]
Therefore, if $\Psi_i - \vec \Delta^\star_i \frac{k}{2}$ is called the \emph{score} of entry $x_i$, we observe that the scores differ between zero entries and non-zero entries.
The whole proof of the algorithmic performance boils down to identify a threshold value $T(\alpha) = T(n, k, \alpha)$ such that, if sufficiently many queries are conducted, all scores of zero entries  are below $T(\alpha)$ while the scores of all non-zero entries exceed this threshold \whp. If we conduct $m = d k \log \frac{n}{k}$ queries, with $d=c(n)\log(k)^{-1}$, we get by a standard application of a Chernoff bound and a union bound over all $k=n^\theta$ non-zero entries $x_i \in V_1$ and, respectively, $n-k=\Theta(n)$ zero-entries $x_i \in V_0$ that $T(\alpha)$ is a valid threshold whenever
\begin{equation}
\begin{aligned}
    \label{eqs_greedy}
    &\frac{- (1-\theta)\alpha^2 d}{4  \bc{1 - \exp \bc{-1/2}}(1 + o(1))}+\theta <0 \\\text{and}\quad  &\frac{- (1-\theta)(1-\alpha)^2 d}{4  \bc{1 - \exp \bc{-1/2}} (1 + o(1))}+1< 0,
\end{aligned}
\end{equation}
which will become clear in a second. Optimizing \eqref{eqs_greedy} with respect to $\alpha \in (0,1)$ and plugging $d$ into $m=dk\log(n/k)$ yields for any $\epsilon > 0$ the sufficient condition
\begin{equation*} m \geq (4 + \epsilon)(1 + o(1)) \bc{1-\exp \bc{-1/2}} \frac{1 + \sqrt{\theta}}{1 - \sqrt{\theta}} k \log(n/k). \end{equation*}

\paragraph{Formal Analysis}\label{sec:proof_algo}
Let $\vec A_{ij} \in \NN_0$ denote how often entry $x_i$ appears in query $a_j$ and let $\vec A = \bc{ \vec A_{ij} }_{i \in [n], j \in [m]}$ be the adjacency matrix of $\G$. Then the following holds.
\begin{corollary}\label{Fact_bin}
Let $1\leq j\leq n$.
Given $\cE_j$, the random variable
\begin{align*}
\vec S_j&=\PSI_j-\vec\Delta_j=\sum_{i=1}^m\vecone\cbc{ \vec A_{ij} > 0 }\bc{\vy_j-\vec A_{ij}}
\end{align*}
has distribution $\displaystyle\Bin\bc{\vec \Delta^\star_j \Gamma-\vec\Delta_j,\frac{k-\vecone\{ \SIGMA(j) = 1\} }{n-1}}.$
\end{corollary}
\begin{proof}
This is an immediate consequence of the model definition. There are $\Gamma \vec \Delta^\star_j - \vec \Delta_j$ half-edges connected to query-nodes in the neighborhood of $x_j$ that are connected to entry-nodes $x_i \neq x_j$. Each of these half-edges is connected to one of $k - \vecone \cbc{\SIGMA(j) = 1}$ entry-nodes belonging to an entry of value $1$, independently, from the $n-1$ remaining entry-nodes. 
\end{proof}
Now it is possible to immediately infer the expectation of $\vec S_j$ conditioned on the event $\cR$ (as defined in \eqref{def_R}). For the sake of brevity let $\gamma = 1 - \exp \bc{-1/2}$.
Given the event $\cR$ which guarantees concentration properties of the underlying graph, we get \whp
\begin{align}
    \label{eq_delta} \ex\brk{ \vec S_j\mid\cE_j, \cR } &= \bc{1 \pm \delta} \frac{\gamma k m}{2} \\ \lefttag{where} \notag \delta &:= \frac{\sqrt{2} \log n}{\sqrt{\gamma m k}} = o(1).
\end{align} 
The Chernoff bound allows us to bound $\vec S_j$ as follows. 
\begin{lemma}\label{Lemma_tails}
Let $\alpha \in (0,1)$ be a constant and $m = d k \log \frac{n}{k}$. Then \begin{align*}
	\MoveEqLeft \pr\bc{\abs{\vec S_j-\ex\brk{\vec S_j\mid\cE_j, \cR}} \geq (1 - \alpha) m/2\mid\cE_j, \cR}\\
	&\leq \exp \bc{ - \bc{1+o(1)} \frac{(1 - \alpha)^2 d}{4 \gamma (1 + o(1))} \log \frac{n}{k}}.
\end{align*}
\end{lemma}

\begin{proof}
The Chernoff bound (\cref{Lem_ChernoffBoundDKL}) directly implies 
\begin{align*}
	\MoveEqLeft \pr \bc{\abs{\vec S_j-\ex\brk{\vec S_j\mid\cE_j}\mid\cE_j, \cR}\geq(1 - \alpha) m/2 \mid \cR} \\
	&\leq\exp\bc{- \bc{1+o(1)} \frac{(1 - \alpha)^2m}{8 \Erw \brk{\vec S_j \mid \cE_j, \cR}} } \\
	&=\exp\bc{- \bc{1+o(1)}\frac{(1 - \alpha)^2 d }{4\gamma(1 + o(1))}\log(n/k)}. \qedhere
\end{align*}
\end{proof}
Next we show that, with a suitable choice of a threshold, the scores of zero- and one-entries are well separated. \begin{corollary}\label{Cor_tails} Let $\epsilon > 0$ be an arbitrary constant. If $m \geq \bc{4 + \epsilon} \bc{1 - \exp \bc{-1/2}} \frac{1 + \sqrt{\theta}}{1 - \sqrt{\theta}} k \log \frac{n}{k}$ then there exists an $\alpha \in (0,1)$ such that, \whp, we have
\begin{align*}
	\vec S_j+\vec\Delta_j&\geq\ex\brk{\vec S_j\mid\cE_j, \cR}+(1-\alpha)m/2
\intertext{for all $x_j$ where $\SIGMA(j) = 1$, and}
	\vec S_j&<\ex\brk{\vec S_j\mid\cE_j, \cR}+(1-\alpha)m/2
\end{align*}
for all $x_j$ where $\SIGMA(j) =0$.
\end{corollary}
\begin{proof}
	Let $x_j \in V_1(\G)$. Again, we make use of the concentration properties guaranteed by conditioning on $\cR$. Therefore, we assume that $\vec\Delta_j=m/2+O(\sqrt m\log n)$.
	Then \cref{Lemma_tails} ensures that
	\begin{align*}
		\MoveEqLeft \pr\bc{\vec S_j+\vec\Delta_j\leq\ex\brk{\vec  S_j\mid\cE_j, \cR}+(1-\alpha)m/2\mid\cE_j, \cR}\\
		&\leq
		\exp \bc{-\alpha^2 d /(4 \gamma (1 + o(1))) \log \frac{n}{k}}\\
		&=\exp\bc{ \frac{(\theta-1)\alpha^2 d}{4 \gamma (1 + o(1))}\log n}.
	\end{align*}
	Hence, the union bound shows that the first inequality holds for all $k$ elements of $V_1(\G)$ \whp if
		\begin{align}\label{eqtail1}
		\frac{(\theta-1)\alpha^2d}{4 \gamma (1 + o(1))}+\theta&<0.
	\end{align}
	Analogously, the second inequality holds for all $n-k$ elements of $V_0(\G)$ \whp if
	\begin{align*}
		\MoveEqLeft \pr\brk{\vec S_j\geq\ex\brk{\vec S_j\mid\cE_j, \cR}+(1-\alpha)m/2\mid\cE_j, \cR}\\
		&\leq
		\exp \bc { \bc{(1-\alpha)^2d/(4 \gamma (1 + o(1)))} \log \frac{n}{k}}\\  &=\exp\bc{\frac{(\theta-1)(1-\alpha)^2d}{4 \gamma (1 + o(1))}\log n}.
	\end{align*}
	Again, the union bound shows that the second inequality holds \whp if
\begin{align}\label{eqtail2}
	\frac{(\theta-1)(1-\alpha)^2d}{4 \gamma (1 + o(1))}+1<0.
	\end{align}
	
	Note that the condition in \eqref{eqtail1} is monotonically decreasing in $\alpha$ while the condition in \eqref{eqtail2} is monotonically increasing in $\alpha$.
	Hence the optimal choice of $\alpha$ is the one that makes the two terms in \eqref{eqtail1} and \eqref{eqtail2} equal:
	\begin{align*}
	\frac{(\theta-1)\alpha^2d}{4 \gamma (1 + o(1))}+\theta=\frac{(\theta-1)(1-\alpha)^2d}{4 \gamma (1 + o(1))}+1,
	\end{align*}
	which boils down to
	\begin{align*}
	\alpha&=\frac{d-4 \gamma (1 + o(1))}{2d}.
	\end{align*}
 By putting this solution for $\alpha$ into \eqref{eqtail1} we get
	\begin{align*}
	\frac{(\theta-1)(d-4 (\gamma + o(1))^2}{16\gamma d + o(1)}+\theta < 0.
	\end{align*}
	It now suffices to find the minimal $d=d(\theta) > 0$ such that
	\begin{align*}
	\frac{(\theta-1)(d-4\gamma + o(1))^2}{16\gamma d + o(1)}+\theta=0.
	\end{align*}
	Hence, we solve for (positive) $d$ and obtain that \cref{eqtail1,eqtail2} hold \whp provided 
	\begin{align*}
	d\geq4\gamma\cdot\frac{1+\sqrt{\theta}}{1-\sqrt{\theta}} + o(1),
	\end{align*}
	which matches the assumption in the lemma statement.
\end{proof}
We are now ready to formally prove \cref{Thm_alg}.
\begin{proof}[Proof of \cref{Thm_alg}]
According to \cref{lem_event_R_high_prob}, the event $\cR$ is a high-probability event.
\cref{Cor_tails} then immediately implies the theorem, together with the definition $m = d k \log \frac{n}{k}$.
\end{proof}

\section{Information-Theoretic Achievability}
In the following section we prove \cref{Thm_inf}.
Our approach is based on counting alternative input vectors $\sigma \neq \SIGMA$ that yield the same sequence of query results as the ground truth $\SIGMA$.
Note that the underlying techniques are regularly employed for random constraint satisfaction problems \cite{aco_2018}.

We start with an outline of the proof.
Let $S_k(\G,\vec{y})$ be the set of all vectors $\sigma\in\{0,1\}^n$ of Hamming weight $k$ such that 
\begin{align*}
\vec{y}_{a_i}&=\abs{\cbc{x_j \in \partial a_i: \sigma(j) = 1}}\quad\text{for all $i\in[m]$}.
\end{align*}
This means, we fix $m$ queries $a_1, \ldots, a_m$ and let $S_k(\G,\vec{y})$ be the set of all vectors $\sigma \in \cbc{0,1}^n$ with exactly $k$ ones that are consistent with the query results.
Let now $Z_k(\G,\vec{y})=|S_k(\G,\vec{y})|$.
We need to prove that $Z_k(\G,\vec{y}) = 1$ \whp if the number of queries $m$ exceeds $\mqgt$.
Note that we can always reconstruct $\SIGMA$ exactly in this case via an exhaustive search (recall that from an information-theoretic point of view the computational power is assumed to be unlimited).

In our analysis, it turns out that it is much more convenient to study $Z_{k,\ell}(\G, \vec{y})$, the number of alternative vectors that are consistent with the query results and have a so-called \emph{overlap} of $\ell$ with $\SIGMA$.
The overlap is the number of one-entries under $\SIGMA$ that are also present in an alternative vector $\sigma$. 
Formally, we define
\begin{align*}
Z_{k,\ell}(\G,\vec{y})&=\abs{\cbc{\sigma\in S_k(\G,\vec{y}):
\sigma \neq \SIGMA, \scal{\SIGMA}\sigma=\ell}}.
\end{align*}
It now suffices to prove that $\sum_{\ell = 0}^{k-1} Z_{k,\ell}(\G,\vec{y})=0$ for $m \geq (1+\eps) \mqgt$ \whp.
To this end, two separate arguments are needed.
First, we show in \cref{Prop_small_overlap} via a first moment argument that no second satisfying input vector $\sigma$ can exist with a small overlap with $\SIGMA$.
Secondly, we employ in \cref{Prop_big_overlap} the classical coupon collector argument to show that a second satisfying configuration cannot exist for large overlaps. 
Intuitively, this means that an entry that is flipped from zero under $\SIGMA$ to one under an alternative configuration $\sigma$ initiates a cascade of other changes to maintain the observed query results. 
The full technical proofs for the following statements can be found in the appendix.

\begin{proposition}\label{Prop_small_overlap}
Let $\epsilon>0$, $0<\theta< 1$ and assume that $m>(1+\epsilon)\mqgt$. \Whp, we have \begin{equation*}
\sum_{\ell = 0}^{k(1 - \exp(-1/2))}Z_{k,\ell}(\G,\vec{y})=0.
\end{equation*}
\end{proposition}
We now sketch the proof of \cref{Prop_small_overlap}.
By Markov's inequality it suffices to show that $\Erw[Z_{k,\ell}(\G,\vec{y})] \to 0$ fast enough for all $\ell$ with $0 \leq \ell < k - \bc{1 - \exp(-1/2)} \log k$ if $m \geq (1 + \epsilon) \mqgt$ for some $\epsilon > 0$.
For $\Erw[Z_{k,\ell}(\G,\vec{y})]$ we compute
\begin{align*}
\Erw[Z_{k,\ell}(\G,\vec{y})] \leq
\binom{k}{\ell} \binom{n-k}{k-\ell}
\prod_{i=1}^m \sum_{j=1}^{\vec{y}_{a_i}} \left(
\binom{\Gamma}{j,j,\Gamma-2j}\right. \\
\left.\cdot \left((1-\ell/k)\frac{k}{n}\right)^{2j}
\cdot \left(1-2(1-\ell/k)\frac{k}{n}\right)^{\Gamma-2j}\right).
\end{align*}
The combinatorial meaning is the following:
The binomial coefficients count the number of possible input vectors $\sigma \neq \SIGMA$ of overlap $\ell$ with $\SIGMA$.
The subsequent term measures the probability that a specific such $\sigma$ yields the same results on queries $a_1, \ldots, a_m$ as $\SIGMA$. To see this, we divide the entries $x_1, \ldots, x_n$ into three categories. 
The first category contains those entries that exhibit the same value under $\sigma$ and $\SIGMA$. The second and third category feature those entries that are set to one under $\sigma$ and to zero under $\SIGMA$ and vice versa.
Recall that $\ell$ determines the number of $x_i$ that are set to one under both vectors $\sigma$ and $\SIGMA$.
The probability for a specific entry to be in the 
first category is $1-2(1-\ell/k)k/n$, while the probability for a specific entry to be in the second or third categories is $(1-\ell/k)k/n$ each.
The key observation is that the query results are the same between $\sigma$ and $\SIGMA$ if and only if the number of entries in the second category is identical to the number of entries in the third category.
We compute (a bound on) the sum over the number of entries which are flipped.
Simplifying the term and conditioning on the high probability event $\cR$ 
yields the following lemma.
\begin{lemma}\label{small_overlap_part_function}
For every $0 \leq \ell \leq k- \bc{1 - \exp(-1/2)} \log k$ and a random variable $\vec{X} \sim \Bin_{\geq 1}(\Gamma,2(1-\ell/k)k/n)$, we have
\begin{align*}
\Erw[Z_{k,\ell}] &\leq (1 + O(1))\Erw[Z_{k,\ell}(\G,\vec{y})\mid \cR] \\
&\leq (1+O(1)) \binom{k}{\ell} \binom{n-k}{k-\ell} \bc{ \frac{1}{\sqrt{2 \pi}} \Erw\left[\frac{1}{\sqrt{\vec{X}}} \right]}^m.
\end{align*}
Here, $\Bin_{\geq i}(n, p)$ is the binomial distribution with parameters $n$ and $p$ where we condition that its outcome is at least $i$.
\end{lemma}
\begin{proof}
The product of the two binomial coefficients simply accounts for the number of vectors $\sigma$ that have overlap $\ell$ with $\SIGMA$.
Let $\cS$ denote the event that one specific $\sigma \in \{0,1\}^n$ that has overlap $\ell$ with $\SIGMA$ belongs to $S_{k,\ell}(\G, \vec{y})$.
It suffices to show for $\vec{X} \sim \Bin_{\geq 1}(\Gamma,2(1-\ell/k)k/n)$ that
\begin{align}\label{upper_prob_term}
  \Pr[\cS \mid \cR ] &\leq (1+O(1)) \bc{\frac{1}{\sqrt{2 \pi}} \Erw\left[\frac{1}{\sqrt{\vec{X}}} \right]}^m.
\end{align}
The remainder of the proof is dedicated to showing \cref{upper_prob_term}. 

By the design $\G$, each query contains $\Gamma=n/2$ entries chosen uniformly at random, and we observe that all query results are statistically independent of each other.
Therefore, we need only to determine the probability that for a specific $\sigma$ and a specific query $a_i$ the result is consistent with the result under $\SIGMA$ such that $\vec{y}_i = y_i$.
Given the overlap $\ell$, we know for $\sigma$ drawn uniformly at random that $\pr\brk{\SIGMA_i = \sigma_i = 1 }=\ell/n,$ $\pr\brk{\SIGMA_i = \sigma_i=0}=(n-2k+\ell)/n$ and finally $\pr\brk{\SIGMA_i \neq \sigma_i} =(k-\ell)/n$ holds for all $x_i$, $i=1 \ldots n$.
We get {
\allowdisplaybreaks
\begin{align}
    \Pr[\cS \mid \cR ] 
    &\leq \prod_{i=1}^m \sum_{j=1}^{\vec{y}_{i}} \Bigg(\binom{\Gamma}{j,j,\Gamma-2j}\cdot \left((1-\ell/k)\frac{k}{n}\right)^{2j}\notag
    \\ &\quad\cdot \left(1-2(1-\ell/k)\frac{k}{n}\right)^{\Gamma-2j} \Bigg) \notag \\
     &\leq \Bigg(\sum_{j=1}^{\Gamma/2} \binom{\Gamma}{2j}
     \left(2(1-\ell/k)\frac{k}{n}\right)^{2j}\notag\\
     &\quad \cdot\left(1-2(1-\ell/k)\frac{k}{n}\right)^{\Gamma-2j} \binom{2j}{j} 2^{-2j}\Bigg)^m .\label{EqRandomWalk}
\end{align}
}
The last two components of \eqref{EqRandomWalk} describe the probability that a one-dimensional simple random walk returns to its original position after $2j$ steps, which is by \cref{randomwalk} equal to $(1+O(j^{-1}))/\sqrt{\pi j}$. The former term describes the probability that a $\Bin_{\geq 1}(\Gamma, 2 (1 - \ell/k)k/n))$ random variable $\vX$ takes the value $2j$. For $\ell \leq k- (1-\exp(-1/2))\log k$ the expectation of $\vX$ given $\G$ is at least of order $\log k$ such that the asymptotic description of the random walk return probability is feasible. Note that if $\ell$ gets closer to $k$, the expectation of $\vX$ gets finite, s.t.\ the random walk approximation is not feasible anymore. Therefore, using \cref{binomial}, we can, as long as $\Gamma (2(1-\ell/k)k/n) = \Omega(\log n)$, simplify \eqref{EqRandomWalk} to 
\begin{align*}
    \Pr[ \cS \mid \cR ] &\leq (1 + O(1)) \Bigg(\sum_{j=1}^{\Gamma/2} \binom{\Gamma}{2j} \left(2(1-\ell/k)\frac{k}{n}\right)^{2j} \\
    &\quad\cdot\left(1-2(1-\ell/k)\frac{k}{n}\right)^{\Gamma-2j} \frac{1}{\sqrt{\pi j}}\Bigg)^m \\
    &=(1 + O(1)) \Bigg(\frac{1}{2} \sum_{j=1}^{\Gamma} \binom{\Gamma}{j} \left(2(1-\ell/k)\frac{k}{n}\right)^{j} \\
    &\quad\cdot\left(1-2(1-\ell/k)\frac{k}{n}\right)^{\Gamma-j} \frac{1}{\sqrt{\pi j/2}}\Bigg)^m \\
    &= (1+O(1)) \bc{\frac{1}{\sqrt{2 \pi}} \Erw\left[\frac{1}{\sqrt{\vec{X}}} \right]}^m
\end{align*}
for large $n \gg 1$ which implies \cref{small_overlap_part_function}.
\end{proof}
While the expression given through \cref{small_overlap_part_function} might look hard to work with, it can be simplified using standard asymptotic arguments as follows.
\begin{lemma}\label{small_overlap_part_function_simplify}
For every $0 \leq \ell \leq k- \bc{1 - \exp(-1/2)} \log k$, $m=ck\frac{\log(n/k)}{\log(k)}$ and $n \gg 1$, we have
\begin{align*}
\MoveEqLeft \frac{1}{n}  \log\left( \Erw[Z_{k,\ell}(\G,\vec{y})\mid\cR] \right) \\
&\leq (1+o(1))
\Bigg(\frac kn H\bc{\frac \ell k}+\bc{1-\frac kn} H\bc{\frac{k - \ell}{n-k}} 
\\ & \quad -\frac{c k/n \log (n/k)}{2 \log k} \log\bc{2 \pi \bc{1-\frac \ell k}k}\Bigg).
\end{align*}

\end{lemma}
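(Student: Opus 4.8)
The plan is to take $\tfrac1n\log$ of the right-hand side of \cref{small_overlap_part_function} and estimate its three factors separately. Write $\lambda=\lambda(\ell)=\bc{1-\ell/k}k$, which is exactly the mean $\Gamma\cdot 2\bc{1-\ell/k}k/n$ of the unconditioned binomial underlying $\vec X$ (recall $\Gamma=n/2$), and set $c=m\log k/\bc{k\log(n/k)}$ so that $m=ck\log(n/k)/\log k$ and $\tfrac mn=c\,(k/n)\log(n/k)/\log k$. The two binomial factors are controlled by the standard entropy bound $\binom ab\le\exp\bc{aH(b/a)}$ (with natural logarithms), giving
\begin{align*}
\frac1n\log\binom k\ell\le\frac kn H\bc{\frac\ell k},\qquad
\frac1n\log\binom{n-k}{k-\ell}\le\bc{1-\frac kn}H\bc{\frac{k-\ell}{n-k}}.
\end{align*}
The bounded prefactor in \cref{small_overlap_part_function} contributes only $O(1/n)$ after $\tfrac1n\log$ and is absorbed, so it remains to evaluate $\tfrac mn\log\bc{\tfrac1{\sqrt{2\pi}}\Erw\brk{1/\sqrt{\vec X}}}$.

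The heart of the argument is the estimate, uniform over $0\le\ell\le k-\log k$,
\begin{align*}
\Erw\brk{1/\sqrt{\vec X}}=(1+o(1))\lambda^{-1/2}.
\end{align*}
Granting this, $\tfrac1{\sqrt{2\pi}}\Erw\brk{1/\sqrt{\vec X}}=(1+o(1))(2\pi\lambda)^{-1/2}$, so $\log\bc{\tfrac1{\sqrt{2\pi}}\Erw\brk{1/\sqrt{\vec X}}}=-\tfrac12\log(2\pi\lambda)+o(1)=-(1+o(1))\tfrac12\log(2\pi\lambda)$, where the last step uses $\lambda\ge\log k\to\infty$, so that $\log(2\pi\lambda)\to\infty$ dominates the additive $o(1)$. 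Multiplying by $\tfrac mn$ and recalling $\lambda=(1-\ell/k)k$ reproduces the third term $-(1+o(1))\tfrac{c(k/n)\log(n/k)}{2\log k}\log\bc{2\pi(1-\ell/k)k}$.

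To prove the estimate I would argue as follows. For $\ell\le k-\log k$ the mean satisfies $\lambda\ge\log k\to\infty$, so $\vec X$ is concentrated. First, the conditioning on $\cbc{\vec X\ge1}$ is harmless: with $p=2\lambda/n$ one has $\Pr\brk{\Bin(\Gamma,p)=0}=(1-p)^\Gamma=\exp(-\lambda(1+o(1)))\le k^{-1+o(1)}=o(1)$, because the correction terms $\Gamma p^2,\Gamma p^3,\dots$ are $O(\lambda^2/n)=o(\lambda)$ as $\lambda\le k=o(n)$; hence conditioning rescales the expectation by $1/\Pr\brk{\vec X\ge1}=1+o(1)$. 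Next I split $\Erw\brk{\tfrac1{\sqrt{\vec X}}\ind{\vec X\ge1}}$ at the event $\cbc{\abs{\vec X-\lambda}\le\lambda^{2/3}}$. On this event, writing $u=(\vec X-\lambda)/\lambda$ with $\abs u\le\lambda^{-1/3}$ and Taylor expanding $(1+u)^{-1/2}=1-u/2+O(u^2)$, combined with $\Erw\brk{\vec X-\lambda}=0$ and $\Var(\vec X)=\lambda(1-p)=O(\lambda)$, yields a contribution $\lambda^{-1/2}(1+O(1/\lambda))$. Off this event a Bernstein/Chernoff bound gives $\Pr\brk{\abs{\vec X-\lambda}>\lambda^{2/3}}\le 2\exp(-\Omega(\lambda^{1/3}))$, and since $1/\sqrt{\vec X}\le1$ for $\vec X\ge1$ the resulting contribution is $O(\exp(-\Omega(\lambda^{1/3})))=o(\lambda^{-1/2})$, as $\sqrt\lambda\exp(-\Omega(\lambda^{1/3}))\to0$. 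Summing the two pieces proves the estimate, and every error term is largest at the endpoint $\lambda=\log k$, which secures uniformity in $\ell$.

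Collecting the three contributions yields the asserted bound. The only relative error is the $o(1)$ on the third factor; it is of order $o\bc{(k/n)\log(n/k)}$, i.e.\ negligible against the $(k/n)\log(n/k)$-scale of the leading terms, and is therefore folded into the single $(1+o(1))$ prefactor. The main obstacle is precisely the uniform control of $\Erw\brk{1/\sqrt{\vec X}}$ near the upper endpoint $\ell=k-\log k$, where $\lambda$ is only of order $\log k$: there both the Gaussian/Taylor approximation and the conditioning on $\cbc{\vec X\ge1}$ are most delicate, and one must check that the relative error is genuinely $o(1)$ and not merely bounded.
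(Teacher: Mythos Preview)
Your proof is correct and follows the same overall route as the paper: take $\tfrac1n\log$ of the bound from \cref{small_overlap_part_function}, estimate the two binomial coefficients by entropies, and replace $\Erw\brk{1/\sqrt{\vec X}}$ by $(1+o(1))\,\Erw[\vec X]^{-1/2}=(1+o(1))\,\lambda^{-1/2}$. The only substantive difference is how this last estimate is obtained. The paper simply invokes a Jensen-gap corollary (\cref{Cor_JensenGap}) to pass from $\Erw\brk{\vec X^{-1/2}}$ to $\Erw[\vec X]^{-1/2}$, whereas you prove it from scratch via a truncation at $\abs{\vec X-\lambda}\le\lambda^{2/3}$, a second-order Taylor expansion of $(1+u)^{-1/2}$ on the bulk, and a Chernoff tail bound off it. Your argument is more explicit and makes the uniformity in $\ell$ (down to $\lambda=\log k$) transparent; the paper's version is shorter but relies on a corollary whose own proof is somewhat terse. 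One minor point: your final paragraph says the additive error is $o\bc{(k/n)\log(n/k)}$ and the leading terms are of that scale, but near $\ell=k-\log k$ the dominant (third) term is only of order $(k/n)\log(n/k)\,\log\log k/\log k$; the conclusion is still fine because the actual error is $o\bc{(k/n)\log(n/k)/\log k}$, which is $o(1)$ times that third term.
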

The key is to choose $c = c(n)$ such that $Z_{k,\ell}(\G,\vec{y})\to0$ for every $\ell\leq k- \bc{1 - \exp(-1/2)} \log k$ when $n\to\infty$. Asymptotically, $\log\left(\Erw[Z_{k,\ell}(\G,\vec{y})]/n\right)$ takes its maximum at $\ell = \Theta \bc{k^2/n}$. Therefore, the r.h.s.\ of \eqref{small_overlap_part_function_simplify} becomes negative if and only if the number of queries $m$ parametrized by $c$ exceeds $\mqgt$. This is formalized in the following lemma and concludes the proof of \cref{Prop_small_overlap}.
\begin{lemma}\label{Lem_small_overlap_partition_function_negative}
For every $0 \leq \ell \leq k- \bc{1 - \exp(-1/2)}\log k$, $0<\theta< 1$ and $\epsilon > 0$ it holds if $m \geq (1+\epsilon)\mqgt$ that
\begin{equation*}
\frac{1}{n} \log \Erw[Z_{k,\ell}(\G,\vec{y})\mid\cR] < 0 .
\end{equation*}
\end{lemma}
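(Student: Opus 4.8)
The plan is to combine \Cref{small_overlap_part_function_simplify} with a short one-dimensional analysis of the resulting bound as a function of the overlap $\ell$. Writing
\[
g(\ell)=\frac kn H\Bigl(\frac\ell k\Bigr)+\Bigl(1-\frac kn\Bigr)H\Bigl(\frac{k-\ell}{n-k}\Bigr)-\frac{c\,(k/n)\log(n/k)}{2\log k}\,\log\Bigl(2\pi\Bigl(1-\frac\ell k\Bigr)k\Bigr)
\]
for the right-hand side of \Cref{small_overlap_part_function_simplify}, that lemma gives $\frac1n\log\Erw[Z_{k,\ell}(\G,\vec y)\mid\cR]\le(1+o(1))g(\ell)$, and since $1+o(1)>0$ for large $n$ it suffices to prove $g(\ell)<0$ uniformly over $0\le\ell\le k-\log k$. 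I use the parametrization $m=c\,k\log(n/k)/\log k$, so that $c$ is exactly the constant appearing in \Cref{small_overlap_part_function_simplify}; by the definition of $\minf$ the assumption $m>\minf$ is then equivalent to $c>2$ (indeed $c\ge 2(1+\eps)$ under the hypothesis of \Cref{Prop_small_overlap}).

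Next I would scale by $n/k>0$, which preserves the sign, and substitute $x=\ell/k$, $u=1-x\in[\log k/k,\,1]$, $L=\log(n/k)$ and $\kappa=L/\log k=(1-\theta)/\theta$; recall $L\to\infty$ while $\kappa$ is a constant and $\kappa\log k=L$. Bounding the entropies crudely — $H(x)\le\log2$ and, via $H(p)\le p\log(e/p)$ applied to $p=k(1-x)/(n-k)$,
\[
\frac{n-k}{k}H\Bigl(\frac{k(1-x)}{n-k}\Bigr)\le(1-x)\bigl[1+L-\log(1-x)\bigr]
\]
— and expanding the last term of $g$ through $\log\bigl(2\pi(1-x)k\bigr)=\log k+\log\bigl(2\pi(1-x)\bigr)$, all $\Theta(1)$ contributions collect into an absolute constant and one obtains
\[
\frac nk\,g(\ell)\le\phi(u):=L\Bigl(u-\tfrac c2\Bigr)-\tfrac{c\kappa}{2}\log u+C_0 .
\]

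The function $\phi$ is convex on $(0,1]$ since $\phi''(u)=c\kappa/(2u^2)>0$, so its maximum over $[\log k/k,1]$ is attained at an endpoint; equivalently its unique stationary point $u^\ast=c\kappa/(2L)$ is a minimum, and $\log k/k\ll u^\ast\to 0$ places it strictly inside the interval. At $u=1$ one gets $\phi(1)=L(1-c/2)+C_0\to-\infty$, because $1-c/2<0$ is a negative constant while $L\to\infty$. At the other endpoint the leading $-\tfrac c2 L$ from the first bracket is cancelled exactly by the $+\tfrac c2 L$ produced by $-\tfrac{c\kappa}{2}\log u=\tfrac{c\kappa}{2}(\log k-\log\log k)$ (using $\kappa\log k=L$), leaving $\phi(\log k/k)=-\tfrac{c\kappa}{2}\log\log k+o(1)+C_0\to-\infty$. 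Hence $\max_u\phi(u)<0$ for all large $n$, which yields $g(\ell)<0$ uniformly in $\ell$ and therefore the claim.

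The step I expect to be the main obstacle is the bookkeeping at the endpoint $u=\log k/k$, i.e.\ $\ell=k-\log k$: there the two $\Theta(L)$ contributions cancel, and the sign of the bound is decided solely by the far smaller — yet still divergent — term $-\tfrac{c\kappa}{2}\log\log k$, so one must track precisely which $\Theta(L)$ pieces cancel and check that the surviving term points the right way. By contrast the interior is harmless: the heuristic maximizer $\ell\approx k^2/n$ sits near $x=0$ (that is, $u\approx1$), which is exactly the endpoint $u=1$ treated above, so the convexity argument lets me avoid having to locate the maximizer precisely.
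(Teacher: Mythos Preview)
Your proof is correct, and it takes a genuinely different route from the paper's argument. The paper works directly with the function $f(n,k,\ell)$, computes its first and second derivatives in $\ell$, and splits into two cases ($\ell=o(k)$ versus $\ell=\Theta(k)$): in the first it locates the unique interior maximum at $\ell=k^2/n$ and checks that $f(k^2/n)<0$ is equivalent to $c>2$; in the second it shows $f'(\ell)<0$ because the term $\log\bigl(\tfrac{k-\ell}{n-k}\bigr)$ dominates and diverges to $-\infty$. You instead sacrifice precision by crudely bounding the two entropy terms, obtaining a clean convex majorant $\phi(u)=L(u-c/2)-\tfrac{c\kappa}{2}\log u+C_0$, and then reduce the whole range to two endpoint evaluations. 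Your cancellation at $u=\log k/k$ --- where the $-\tfrac{c}{2}L$ from the linear part is exactly offset by $\tfrac{c\kappa}{2}\log k=\tfrac{c}{2}L$, leaving the surviving $-\tfrac{c\kappa}{2}\log\log k$ --- is the delicate step you flagged, and it checks out.

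What each approach buys: the paper's derivative analysis makes transparent \emph{why} $c=2$ is exactly the threshold (it is the value at which $f(k^2/n)$ changes sign, and $\ell=k^2/n$ is the genuine maximizer), at the cost of a two-case argument and some care with the sign of $f'$. Your convexity trick is shorter and avoids the case split entirely, but the threshold $c>2$ enters only through the endpoint $u=1$ and the crude entropy bounds obscure that this is sharp. Both arrive at the same conclusion; yours is the more economical route once one is willing to give up the information about where the maximum actually sits.
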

\begin{proof}[Proof of \cref{Prop_small_overlap}]
The proposition is a direct consequence of  \cref{small_overlap_part_function,small_overlap_part_function_simplify,Lem_small_overlap_partition_function_negative} and Markov's inequality.
\end{proof}
While we could already establish that there are \whp no feasible vectors $\sigma \in \cbc{0,1}^n$ that have a small overlap with the ground truth $\SIGMA$, we still need to ensure that there are \whp no feasible vectors that have a large overlap with $\SIGMA$.
Indeed, we exclude such vectors with the next proposition.
\begin{proposition}\label{Prop_big_overlap}
Let $\epsilon>0$ and $0<\theta\leq 1$ and assume that $m>(1+\epsilon)\mqgt$. Given $\cR$ we have $Z_{k,\ell}(\G,\vec{y})=0$ for all $k - (1 - \exp(-1/2)) \log k < \ell<k$ \whp
\end{proposition}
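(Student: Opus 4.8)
The plan is to prove \cref{Prop_big_overlap} by a union bound over the possible ``flip patterns'' that turn $\SIGMA$ into an alternative solution, combined with the fact that in the multigraph model distinct tests are drawn independently. Write $a=k-\ell$, so that the hypothesis $k-\log k<\ell<k$ means $1\le a<\log k$. Any $\sigma\in S_k(\G,\vec y)$ with $\scal{\SIGMA}{\sigma}=\ell$ arises from $\SIGMA$ by selecting a set $D\subseteq V_1$ of $a$ infected individuals to turn healthy (``flip-down'') and a set $U\subseteq V_0$ of $a$ healthy individuals to turn infected (``flip-up''). There are exactly $\binom{k}{a}\binom{n-k}{a}$ such pairs $(U,D)$, and since $a<\log k$ this is at most $(kn)^{\log k}=\exp(O(\log^2 n))$. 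It therefore suffices to show that a fixed pair $(U,D)$ produces a consistent $\sigma$ with probability $\exp(-\omega(\log^2 n))$, and then to sum over all $(U,D)$ and over the fewer than $\log k$ relevant values of $\ell$; together with \cref{Prop_small_overlap} this covers all overlaps $0\le\ell<k$.

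For the per-pattern estimate I would use that, writing $U_i$ and $D_i$ for the number of flip-up and flip-down appearances in test $a_i$, consistency of $\sigma$ with $\vec y$ is exactly the event $\{U_i=D_i \text{ for all } i\in[m]\}$: inside every test a flip-up must be compensated by a flip-down, which is the cascade/coupon-collector mechanism alluded to in \cref{ITupper}. Because each test independently draws its $\Gamma=n/2$ slots uniformly with replacement, the pairs $(U_i,D_i)$ are independent across $i$, so $\Pr[\sigma\text{ consistent}]=\prod_{i=1}^m\Pr[U_i=D_i]$. I would bound each factor using only the simplest covering obstruction, namely that a test containing a flip-down appearance but no flip-up appearance cannot be balanced. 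A direct slot computation gives $\Pr[D_i\ge 1,\ U_i=0]=(1-a/n)^{\Gamma}-(1-2a/n)^{\Gamma}=(1+o(1))(e^{-a/2}-e^{-a})$, so with $p:=e^{-a/2}-e^{-a}$ we obtain $\Pr[U_i=D_i]\le 1-(1+o(1))p$ and hence $\Pr[\sigma\text{ consistent}]\le\exp(-(1+o(1))pm)$.

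It remains to check that $pm$ dominates the union bound. The function $p=e^{-a/2}(1-e^{-a/2})$ is minimised over $1\le a<\log k$ at the right endpoint, where $p\ge(1-o(1))k^{-1/2}$. Since $m>(1+\eps)\minf=\Theta(k)$, this yields $pm=\Omega(\sqrt k)=\Omega(n^{\theta/2})$ uniformly in $a$, and therefore $\Pr[\sigma\text{ consistent}]\le\exp(-\Omega(n^{\theta/2}))$ for every admissible $(U,D)$. Multiplying by the $\exp(O(\log^2 n))$ choices of $(U,D)$ and summing over $\ell$ gives
\[
\Pr\Big[\textstyle\sum_{\ell>k-\log k}Z_{k,\ell}(\G,\vec y)>0\Big]\le \exp\big(O(\log^2 n)-\Omega(n^{\theta/2})\big)=o(1),
\]
which is the assertion of \cref{Prop_big_overlap}.

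The step I expect to be the main obstacle is precisely this accounting. The number of flip patterns, $\exp(O(\log^2 n))$, is super-polynomial, so a crude per-pattern estimate is insufficient. In particular, arguing only that a consistent $\sigma$ must avoid \emph{singleton} tests (tests with exactly one flipped appearance) gives an expected singleton count of order $a\,e^{-a}m$, which is merely $\Theta(\log n)$ when $a$ approaches $\log k$, and hence a per-pattern failure probability that is only polynomially small in $n$ — far too weak to beat the union bound. The resolution is to use the stronger covering obstruction $\{D_i\ge 1,\ U_i=0\}$, whose probability $\approx\Pr[U_i=0]\approx k^{-1/2}$ is much larger than the singleton probability, together with the exact independence of tests in the multigraph model; this boosts the per-pattern bound from $n^{-\Theta(1)}$ to $\exp(-\Omega(\sqrt k))$ and comfortably overpowers $\exp(O(\log^2 n))$. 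The remaining care is to make the constant in $p=\Theta(k^{-1/2})$ uniform over all admissible overlaps; note that the per-test computation is exact, so this route does not even require conditioning on $\cR$.
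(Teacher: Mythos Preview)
Your argument is correct. It takes a genuinely different route from the paper's own proof, so a brief comparison is in order.

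The paper fixes one flipped individual $x_j$ and argues via the coupon-collector principle that the remaining flipped individuals must jointly cover the $\Theta(m)$ distinct tests in $\partial x_j$; since each flipped individual contributes $\Theta(m)$ edges, covering requires at least $\log(\Theta(m))\ge\log k$ of them, contradicting $a<\log k$. The union bound over flip patterns is left implicit there (though it can be carried out, since the coupon-collector failure probability is in fact $\exp(-\Omega(k^{\eps}))$).

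You instead run a first-moment argument directly: independence of the $m$ tests lets you write $\Pr[\sigma\text{ consistent}]=\prod_i\Pr[U_i=D_i]$, and you lower-bound each $\Pr[U_i\neq D_i]$ by the explicit per-test obstruction $\Pr[D_i\ge 1,\,U_i=0]=(1-a/n)^{\Gamma}-(1-2a/n)^{\Gamma}$. This is really the same covering phenomenon viewed test-by-test rather than individual-by-individual, but it buys you three things: the union bound over the $\exp(O(\log^2 n))$ patterns is made fully explicit; the resulting tail $\exp(-\Omega(\sqrt{k}))$ is sharper and transparently dominates that union bound; and, since your per-test computation is exact in the multigraph model, you avoid conditioning on $\cR$ and the auxiliary \cref{deltaminmax,deltanot} that the paper invokes. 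Both arguments are valid; yours is the more self-contained of the two.
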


The proof is fundamentally easy as it follows the classical coupon collector argument.
However, it needs some technical attention. If we consider a vector $\sigma$ of length $n$ different from $\SIGMA$ with the same Hamming weight $k$, at least one entry that is set to one under $\SIGMA$ is labeled zero under $\sigma$. 
Given the event $\cR$, this entry is part of at least $\vec{\Delta}_i^\star > m/4$ different queries whose results all change by at least $-1$, depending on how often the entry participates.
To compensate for these changes, we need to find $x_1 \dots x_\ell$ that are zero under $\SIGMA$ and one under $\sigma$ such that their joint neighborhood is a super-set of the changed queries.
We show that this only happens with probability $o(1)$ following a classical balls-into-bins argument.
We now give the full technical proof.

\begin{proof}[Proof of \cref{Prop_big_overlap}]
Assume that $\sigma \in \cbc{0,1}^n$ is a second vector that is consistent with the query results  $\vec y$. By definition, there is an index $j \in \cbc{1, \ldots, n}$ for which $\SIGMA(j) = 1$ but $\sigma(j) = 0$. 
By \cref{lem_event_R_high_prob} the size of $\partial^\star x_j$ is at least \[
\vec{\Delta}^{\star}_i \geq \bc{1 - \exp(-1/2)} m - O\bc{\sqrt{m}\log n}
\]
and for any query $a_l \in \partial x_j$ we have $\abs{y_l(\sigma)-y_l (\SIGMA)} \geq 1$.
To guarantee that $y(\SIGMA) = y(\sigma)$ it is necessary to identify a set of $h$ entries $\cX$ for which $\sigma(i) = 1 - \SIGMA(i)$ for all $i \in \cX$ with the property that $\cX \supseteq \partial x_j$.

By construction of $\G$, the number of queries in $\partial^\star x_j$ that do not contain any of the entries in $\cX$, i.e., $\vec{H} = \abs{ \cbc{a \in \partial^\star x_j: \cX \cap \partial a = \emptyset} }$, can be coupled with the number of empty bins in a balls-into-bins experiment as follows.
Given $\G$, throw $b = \sum_{i=1}^h \deg(x_i$) balls into $\deg(x_i)$ bins.
Observe that 
\[
\deg(x_i) \geq \bc{1 - \exp(-1/2)} m - O \bc{\sqrt{m} \log n}
\] and denote by $\vec{H'}$ the number of empty bins in this experiment.
Since for any $x_i$ the $\deg(x_i)$ edges are not only distributed over the $(1 - o(1)) \bc{1 - \exp(-1/2)}m$ query-nodes in $ \partial x_j $ but over all $m$ query-nodes in $\G$, we get 
\begin{align}\label{Eq_CouplingBallsBins}
 \Pr \brk{ \vec{H} = 0 \mid \cR} \leq \Pr \brk{ \vec{H'} = 0 \mid \cR}.
\end{align}
We condition on $\cR$ and therefore $b = (1 + o(1)) h m/2$.
Furthermore, set $L = \log(m) h^{-1}$ and let $\gamma = (1 - \exp(-1/2))$.
Then 
the r.h.s.\ of \eqref{Eq_CouplingBallsBins} becomes
\begin{align*}
   \Pr \brk{ \vec{H'} = 0 \mid \cR} &
   \leq \bc{1 - \bc{ 1 - \frac{1}{\gamma m}}^{hm/2} }^{\gamma m} \\
   &= (1 + o(1)) \exp \bc{ - \gamma m^{ 1 - L/(2 \gamma) } }.
\end{align*}
Therefore, if $L < 2 \gamma$, or equivalently, 
\begin{align*}
 h < 2 \gamma \log \bc{ m } \sim 2 \gamma \bc{\log k + \log \log k}, \\
\lefttag{we have}   \Pr \brk{ \vec{H'} = 0 \mid \cR} \leq n^{- \omega(1)}.
\end{align*}
Thus, a Hamming distance of at least one between $\SIGMA$ and $\sigma$ immediately implies that the Hamming distance is at least $2 \gamma \bc{\log k + \log \log k}$ with probability $1 - n^{- \omega(1)}$. 
A union bound over all $k$ one-entries implies the proposition.
\end{proof}

\begin{proof}[Proof of \Cref{Thm_inf}]
The theorem follows directly from \cref{Prop_small_overlap,Prop_big_overlap}.
\end{proof}

\begin{figure*}[t]
\setcounter{figure}{2}
\input{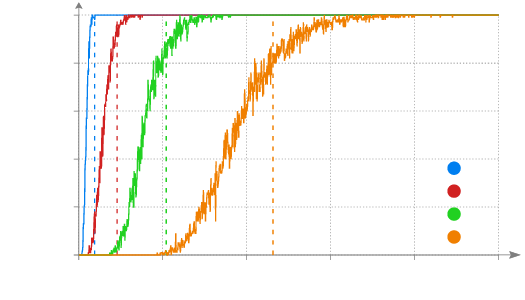}
\hfill
\input{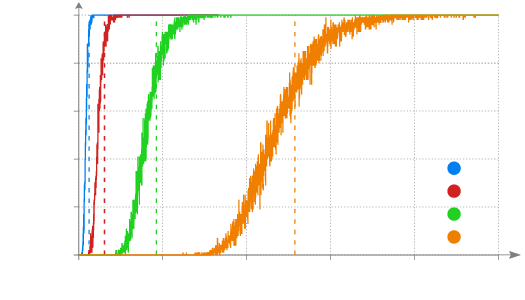}
\vspace{0.5ex}
\caption{The plot shows the rate of successful recovery of $\SIGMA$ among $100$ independent simulation runs over the number of queries $m$ for different values of $\theta$ and $n =  10^3$ (left) and $n = 10^4$ (right).
}
\label{fig:success-rate}
\end{figure*}

\begin{figure*}[t]
\setcounter{figure}{3}
\input{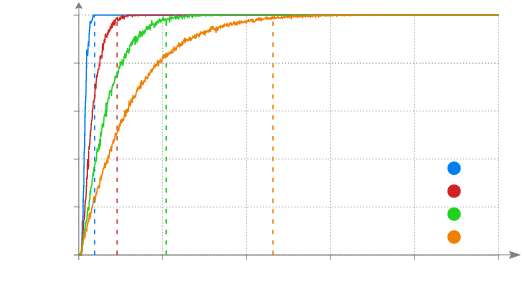}
\hfill
\input{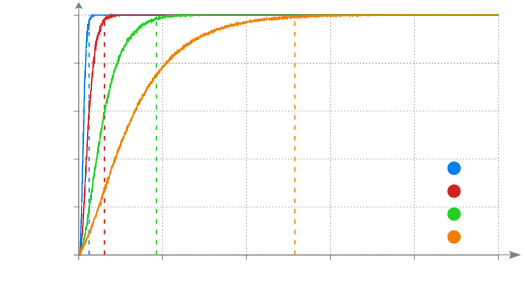}
\vspace{0.5ex}
\caption{The plots show the \emph{overlap} -- the fraction of correctly classified one-entries -- among $100$ independent simulation runs over the numbers of queries $m$ for different values of $\theta$ and $n =  10^3$ (left) and $n = 10^4$ (right).
}
\label{fig:overlap}
\end{figure*}

\begin{figure}[H]
\setcounter{figure}{1}
\input{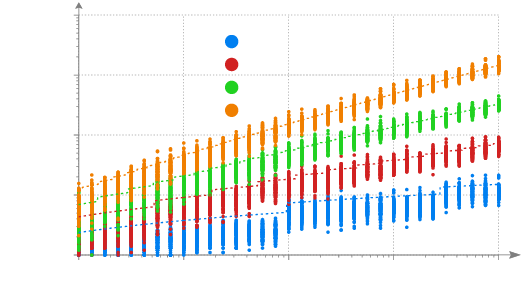}
\vspace{0.5ex}
\caption{The required number of queries until $\SIGMA$ can be exactly reconstructed for different vector lengths $n$ and $\theta$ regimes. For each value of $n$, 100 simulations were carried out independently.\label{fig:required-number-of-tests-1}}
\end{figure}

\section{Empirical Analysis and Simulation Results}

In this section we present simulation results for the MN-Algorithm (\cref{MN_algorithm}).
{
\newcommand\CC{C\nolinebreak[4]\hspace{-.05em}\raisebox{.4ex}{\relsize{-2}{\textbf{++}}}}
Our simulation software is implemented in the \CC{} programming language.
It performs a faithful simulation of the parallel system.
To generate the random structures, we resort to the Mersenne Twister \texttt{mt19937\_64} as provided by the \CC11 \texttt{\textless random\textgreater{}} library.
All of our simulations have been carried out on machines equipped with 20 Intel(R) Xeon(R) E5-2630 v4 CPU cores, backed by 128GiB memory, and running the linux 5.11 kernel.
All required code to reproduce our figures, including the gnuplot scripts and various helper tools, can be obtained from our public github repository. 
}

In our first empirical result in \cref{fig:required-number-of-tests-1} we analyze the number of queries required to reconstruct $\SIGMA$ for $n \in [10^2, 10^6]$ and different values of $\theta$.
The dotted lines show our theoretical asymptotic bounds.
Note that the discontinuities in the theoretical bound stem from rounding the number of one-entries $k$ to the closest integer. We remark that our simulation results align well with the theoretical predictions for larger values of $n$.
For smaller values of $n$, our theoretical results are too optimistic:
the lower-order term hidden in the $o(1)$ in \cref{eqs_greedy} scales as $\Theta \bc{ \frac{\sqrt{\log n}}{k} }$, and while this expression decreases polynomially fast in $n$, it is far from vanishing for small values of $n$ and $\theta$.

In \cref{fig:success-rate,fig:overlap} we analyze the success probability for exact reconstruction of $\SIGMA$ and the number of correctly identified one-entries.
For different numbers of queries  we conducted 100 independent simulation runs for $n = 10^3$ and $n = 10^4$ and different values of $\theta$.
The dashed lines show the phase-transitions predicted by \cref{Thm_alg}.
The data in \cref{fig:overlap} indicate that all but a small fraction of one-entries are correctly detected, even if the exact reconstruction of $\SIGMA$ is still quite unlikely according to \cref{fig:success-rate}.
Overall, the implementation hints at the practical usability of the MN-Algorithm, even for small values of $n$.

\begin{remark}
The formal proof of the algorithmic bound directly gives an insight about the convergence speed and thus about the expected performance of the MN-Algorithm for finite~$n$:
 we can compute that the MN-Algorithm requires an additional multiplicative factor of at least \begin{equation*} \bc{ 1 + {\sqrt{2} \log n} \bc{ { 4 \bc{1 - \exp \bc{-1/2}} m k } }^{-1/2}} \end{equation*} queries in addition to the asymptotic analysis for $n \rightarrow \infty$.
This explains the (slight) deviation of the theoretical and the empirical results for small values of $n$. See the proof of \cref{Cor_tails} in \cref{sec:proof_algo} for the rigorous analysis.
\end{remark}

\section{Conclusions and Open Problems}

In this paper we analyze the binary pooled data problem with additive queries both from an information-theoretic and an algorithmic point of view.
Our first result is a simple greedy reconstruction scheme that performs well even close to the information-theoretic boundaries.
Our main concern is the design of a reconstruction scheme that works well when all queries are conducted in parallel.
In a series of simulations we show that this scheme is applicable to a large range of parameters that can be expected from real-world instances.
For example, our data indicate that on average we correctly identify  99\% of the one-entries when conducting only 220 queries for $n = 1000$ and $\theta = 0.3$.
Our second result sheds light on the information-theoretic achievability threshold, where our theorem closes the open gap between the results of  \cite{djackov_1975} and \cite{grebinski_2000} by establishing a sharp phase transition.

An immediate open problem is to close the gap between the algorithmic and the information theoretic threshold. {Furthermore, there are similar reconstruction problems in which parallel conductance of all queries is crucial. As discussed in the introduction, group testing is such a prime example which was recently fully understood using similar techniques as in the present work. A less well understood reconstruction problem is \emph{threshold group testing} \cite{Chan2013, DeMarco2020}, in which a query outputs $1$ if and only if the number of positive entries exceeds a threshold $T > 0$. It is very likely that the techniques of the present contribution can be applied to threshold group testing as well, as they were previously applied to various reconstruction problems, but the tailor-made application remains a highly non-trivial challenge.}
Another exciting avenue for future research are partially parallelizable designs.
Suppose that, for instance, $L$ processing units can be used to evaluate queries in parallel.
Then it is a natural requirement for a design to always conduct up to $L$ queries in parallel.
An interesting open question then is to analyze the trade-offs that arise in such partially parallelized schemes.
In particular, there might be designs providing efficient reconstruction algorithms that outperform the completely parallel design studied in this paper.

\section*{Acknowledgements}
The authors thank Uriel Feige for various detailed comments which improved the quality of the paper significantly. Furthermore, the authors thank Petra Berenbrink and Amin Coja-Oghlan for helpful discussions and important hints.

\bstctlcite{IEEEexample:BSTcontrol}
\bibliography{bibliography}

\begin{thebibliography}{10}
\providecommand{\url}[1]{#1}
\csname url@samestyle\endcsname
\providecommand{\newblock}{\relax}
\providecommand{\bibinfo}[2]{#2}
\providecommand{\BIBentrySTDinterwordspacing}{\spaceskip=0pt\relax}
\providecommand{\BIBentryALTinterwordstretchfactor}{4}
\providecommand{\BIBentryALTinterwordspacing}{\spaceskip=\fontdimen2\font plus
\BIBentryALTinterwordstretchfactor\fontdimen3\font minus
  \fontdimen4\font\relax}
\providecommand{\BIBforeignlanguage}[2]{{%
\expandafter\ifx\csname l@#1\endcsname\relax
\typeout{** WARNING: IEEEtranS.bst: No hyphenation pattern has been}%
\typeout{** loaded for the language `#1'. Using the pattern for}%
\typeout{** the default language instead.}%
\else
\language=\csname l@#1\endcsname
\fi
#2}}
\providecommand{\BIBdecl}{\relax}
\BIBdecl

\bibitem{alaoui_2017}
A.~Alaoui, A.~Ramdas, F.~Krzakala, L.~Zdeborov\'a, and M.~I. Jordan, ``Decoding
  from pooled data: Phase transitions of message passing,'' \emph{{IEEE} Trans.
  Information Theory}, vol.~65, no.~1, pp. 572--585, 2019.

\bibitem{AJS_book}
M.~Aldridge, O.~Johnson, and J.~Scarlett, ``Group testing: An information
  theory perspective,'' \emph{Foundations and Trends in Communications and
  Information Theory}, vol.~15, no. 3--4, pp. 196--392, 2019.

\bibitem{arjoune_2017}
Y.~Arjoune, N.~Kaabouch, H.~E. Ghazi, and A.~Tamtaoui, ``Compressive sensing:
  Performance comparison of sparse recovery algorithms,'' \emph{Proc. 7th IEEE
  CCWC}, 2017.

\bibitem{BENAMI20201248}
R.~Ben-Ami, A.~Klochendler \emph{et~al.}, ``Large-scale implementation of
  pooled rna extraction and rt-pcr for sars-cov-2 detection,'' \emph{Clinical
  Microbiology and Infection}, vol.~26, no.~9, pp. 1248--1253, 2020.

\bibitem{benz_2008}
R.~W. Benz, S.~J. Swamidass, and P.~Baldi, ``Discovery of power-laws in
  chemical space,'' \emph{Journal of Chemical Information and Modeling},
  vol.~48, no.~6, pp. 1138--1151, 2008.

\bibitem{bshouty_2009}
N.~H. Bshouty, ``Optimal algorithms for the coin weighing problem with a spring
  scale,'' \emph{Proc. 22nd COLT}, 2009.

\bibitem{cao_2014}
C.~C. Cao, C.~Li, and X.~Sun, ``Quantitative group testing-based overlapping
  pool sequencing to identify rare variant carriers,'' \emph{{BMC}
  Bioinformatics}, vol.~15, p. 195, 2014.

\bibitem{Chan2013}
C.~L. Chan, S.~Cai, M.~Bakshi, S.~Jaggi, and V.~Saligrama, ``Stochastic
  threshold group testing,'' \emph{2013 {IEEE} Information Theory Workshop
  ({ITW})}, 2013.

\bibitem{coja_spiv}
A.~Coja-Oghlan, O.~Gebhard, M.~Hahn-Klimroth, and P.~Loick, ``Optimal group
  testing,'' \emph{Combinatorics, Probability and Computing}, p. 1–38, 2021.

\bibitem{aco_2018}
A.~Coja-Oghlan, F.~Krzakala, W.~Perkins, and L.~Zdeborov\'a,
  ``Information-theoretic thresholds from the cavity method,'' \emph{Advances
  in Mathematics}, vol. 333, pp. 694--795, 2018.

\bibitem{djackov_1975}
A.~G. Djackov, ``On a search model of false coins,'' in \emph{Topics in
  Information Theory. Hungarian Acad. Sci}, vol.~16, 1975, pp. 163--170.

\bibitem{donoho_l1}
D.~Donoho and J.~Tanner, ``Thresholds for the recovery of sparse solutions via
  l1 minimization,'' in \emph{Proc. 40th CISS}, 2006, pp. 202--206.

\bibitem{dorfman_1943}
R.~Dorfman, ``The detection of defective members of large populations,''
  \emph{The Annals of Mathematical Statistics}, vol.~14, no.~4, pp. 436--440,
  1943.

\bibitem{feige2020quantitative}
U.~Feige and A.~Lellouche, ``Quantitative group testing and the rank of random
  matrices,'' 2020.

\bibitem{Foucart2013}
S.~Foucart and H.~Rauhut, \emph{An Invitation to Compressive Sensing}.\hskip
  1em plus 0.5em minus 0.4em\relax New York, NY: Springer New York, 2013, pp.
  1--39.

\bibitem{gao_2018}
X.~Gao, M.~Sitharam, and A.~E. Roitberg, ``Bounds on the jensen gap, and
  implications for mean-concentrated distributions,'' \emph{The Australian
  Journal of Mathematical Analysis and Applications}, vol.~16, no.~16, pp.
  1--16, 2019.

\bibitem{grebinski_2000}
V.~Grebinski and G.~Kucherov, ``Optimal reconstruction of graphs under the
  additive model,'' \emph{Algorithmica}, vol.~28, no.~1, pp. 104--124, 2000.

\bibitem{karimi_2019_2}
E.~Karimi, F.~Kazemi, A.~Heidarzadeh, K.~R. Narayanan, and A.~Sprintson,
  ``Non-adaptive quantitative group testing using irregular sparse graph
  codes,'' \emph{Proc. 2019 IEEE Allerton}, pp. 608--614, 2019.

\bibitem{karimi_2019}
E.~Karimi, F.~Kazemi, A.~Heidarzadeh, K.~R. Narayanan, and A.~Sprintson,
  ``Sparse graph codes for non-adaptive quantitative group testing,'' in
  \emph{IEEE Information Theory Workshop (ITW)}, 2019, pp. 1--5.

\bibitem{liang2021neural}
W.~Liang and J.~Zou, ``Neural group testing to accelerate deep learning,'' in
  \emph{Proc.\ {ISIT}}, 2021, pp. 958--963.

\bibitem{Marco_2013}
G.~D. Marco and D.~Kowalski, ``Searching for a subset of counterfeit coins:
  Randomization vs determinism and adaptiveness vs non-adaptiveness,''
  \emph{Random Struct. Algorithms}, vol.~42, pp. 97--109, 2013.

\bibitem{DeMarco2020}
G.~D. Marco, T.~Jurdzi{\'{n}}ski, D.~R. Kowalski,
  M.~R{\'{o}}{\.{z}}a{\'{n}}ski, and G.~Stachowiak, ``Subquadratic non-adaptive
  threshold group testing,'' \emph{Journal of Computer and System Sciences},
  vol. 111, pp. 42--56, 2020.

\bibitem{martins_2014}
J.~P. Martins, R.~Santos, and R.~Sousa, ``Testing the maximum by the mean in
  quantitative group tests,'' in \emph{New Advances in Statistical Modeling and
  Applications}.\hskip 1em plus 0.5em minus 0.4em\relax Springer, 2014, pp.
  55--63.

\bibitem{OMP}
Y.~C. Pati, R.~Rezaiifar, and P.~S. Krishnaprasad, ``Orthogonal matching
  pursuit: recursive function approximation with applications to wavelet
  decomposition,'' in \emph{Proc. 27th ACSSC}, 1993, pp. 40--44 vol.1.

\bibitem{scarlett_2017}
J.~Scarlett and V.~Cevher, ``Phase transitions in the pooled data problem,'' in
  \emph{Proc. 30th NEURIPS}, 2017, pp. 376--384.

\bibitem{sham_2002}
P.~Sham, J.~S. Bader, I.~Craig, M.~O'Donovan, and M.~Owen, ``Dna pooling: a
  tool for large-scale association studies,'' \emph{Nature Reviews Genetics},
  vol.~3, pp. 862--871, 2002.

\bibitem{shapiro_1960}
H.~S. Shapiro, ``Problem e 1399,'' \emph{Amer. Math. Monthly}, vol.~67, p.~82,
  1960.

\bibitem{DBLP:journals/ijpp/SinghJC18}
D.~P. Singh, I.~Joshi, and J.~Choudhary, ``Survey of {GPU} based sorting
  algorithms,'' \emph{Int. J. Parallel Program.}, vol.~46, no.~6, pp.
  1017--1034, 2018.

\bibitem{OMP_disc}
S.~Sparrer and R.~F.~H. Fischer, ``Soft-feedback omp for the recovery of
  discrete-valued sparse signals,'' in \emph{Proc. 23rd EUSIPCO}, 2015, pp.
  1461--1465.

\bibitem{Asymptopia}
J.~Spencer, \emph{Asymptopia}.\hskip 1em plus 0.5em minus 0.4em\relax American
  Mathematical Society, 2014.

\bibitem{wang_2011}
L.~Wang, X.~Li, Y.~Zhang, and K.~Zhang, ``Evolution of scaling emergence in
  large-scale spatial epidemic spreading,'' \emph{Public Library of Science
  ONE}, vol.~6, 2011.

\bibitem{Wang2010}
Y.~Wang and W.~Yin, ``Sparse signal reconstruction via iterative support
  detection,'' \emph{{SIAM} Journal on Imaging Sciences}, vol.~3, no.~3, pp.
  462--491, Jan. 2010.

\bibitem{NIPS2014_fb8feff2}
Y.~Zhou, U.~Porwal, C.~Zhang, H.~Q. Ngo, X.~Nguyen, C.~R\'{e}, and
  V.~Govindaraju, ``Parallel feature selection inspired by group testing,'' in
  \emph{Advances in Neural Information Processing Systems}, Z.~Ghahramani,
  M.~Welling, C.~Cortes, N.~Lawrence, and K.~Q. Weinberger, Eds.,
  vol.~27.\hskip 1em plus 0.5em minus 0.4em\relax Curran Associates, Inc.,
  2014.

\end{thebibliography}
\appendix






\medmuskip=3.0mu plus 2.0mu minus 4.0mu
\thinmuskip=2.0mu
\thickmuskip=2.0mu plus 5.0mu

\let\section\subsection
\let\subsection\paragraph

We start this appendix with standard concentration bounds that we use throughout this paper.
\begin{lemma}[\cite{Asymptopia}]\label{Lem_ChernoffBoundDKL}
Let $X \sim \Bin(n, p)$ and $\delta \in (0,1)$.
  \begin{align*}
    \lefttag{Then} \Pr \brk{ X > (1+\delta)np} &\leq \exp \bc{ - np \delta^2 / (2 + \delta) } \\
    \lefttag{and}  \Pr \brk{ X < (1-\delta)np} &\leq \exp \bc{ - np \delta^2 / 2 }.
  \end{align*}
\end{lemma}
For binomial random variables, the Jensen gap provides good approximations.
\begin{lemma}[follows from Eq. (1.1) of \cite{gao_2018}]\label{Cor_JensenGap}
Let $\Bin_{\geq i}(n, p)$ be the binomial distribution with parameters $n$ and $p$ where we condition that its outcome is at least $i$.
Let $\vX \sim \Bin_{x \geq 1}(n,p)$ with $np \to \infty$.
Then, for $\ell \in \cbc{ 1/2, 1 }$, we have
\begin{align*}
    \Erw\left[ \vX^{- \ell} \right] &= \bc{ 1 + o(n^{-1}) } {\Erw[\vX]}^{- \ell}. 
\end{align*}
\end{lemma}
The following lemmas are results on random walks. 
\begin{lemma}[\cite{Asymptopia}, Section 1.5]\label{randomwalk}
The probability that a simple random walk on $\ZZ$ with $2j$ steps will end at its original position is given by $(\pi j)^{-1/2} + O(j^{-3/2}).$
\end{lemma}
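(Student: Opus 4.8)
The plan is to reduce the statement to a single binomial coefficient and then invoke Stirling's formula. First I observe that the simple random walk $\vecR$ on $\ZZ$ returns to its starting point after exactly $2j$ steps if and only if precisely $j$ of the $2j$ steps are $+1$ (and the remaining $j$ are $-1$); the order in which they occur is irrelevant. Since the steps are independent and each equals $+1$ with probability $1/2$, the number of $+1$ steps is distributed as $\Bin(2j,1/2)$, and therefore the return probability equals $\binom{2j}{j}\,2^{-2j}$. The whole problem has thus been recast as the asymptotic evaluation of the central binomial coefficient, a completely deterministic computation.

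Next I would apply Stirling's formula in the sharp form $n! = \sqrt{2\pi n}\,(n/e)^n e^{\lambda_n}$, where $\lambda_n = O(1/n)$, to each of the three factorials appearing in $\binom{2j}{j} = (2j)!/(j!)^2$. The exponential factors $(2j/e)^{2j}$ and $(j/e)^{2j}$ combine to produce the leading power $2^{2j}$, while the square-root prefactors yield $\sqrt{4\pi j}/(2\pi j) = 1/\sqrt{\pi j}$. Multiplying the resulting expression for $\binom{2j}{j}$ by $2^{-2j}$ then gives exactly $(\pi j)^{-1/2}$ times a product of the three Stirling error terms.

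The only point requiring genuine care — the nominal main obstacle, though it is entirely routine — is the bookkeeping of these error terms. Collecting them, the combined correction is $e^{\lambda_{2j} - 2\lambda_j}$, and since $\lambda_{2j} = O(1/j)$ and $2\lambda_j = O(1/j)$ we have $\lambda_{2j} - 2\lambda_j = O(1/j)$, hence $e^{\lambda_{2j}-2\lambda_j} = 1 + O(1/j)$. This matches precisely the claimed relative error $(1 + O(j^{-1}))$, and assembling the pieces yields $\binom{2j}{j}\,2^{-2j} = (\pi j)^{-1/2}(1+O(j^{-1}))$, which is the assertion of the lemma.
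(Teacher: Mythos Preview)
Your proposal is correct and follows the standard argument: identify the return probability as $\binom{2j}{j}2^{-2j}$ and apply Stirling's formula with the $O(1/n)$ error term. The paper itself gives no proof of this lemma, merely citing it from \cite{Asymptopia}; your derivation is essentially what one finds there, so there is nothing to compare.
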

{ \allowdisplaybreaks
\begin{lemma}\label{binomial}
The following asymptotic equivalence holds for every $0 < p = p(n) < 1$ when $np \to \infty$.
\begin{align*}
    \MoveEqLeft \sum_{j=1}^{n/2} \binom{n}{2j}p^{2j}(1-p)^{n-2j}j^{-1/2}  \\
    &= 2^{-1/2} \sum_{j=1}^n \binom{n}{j} p^{j}(1-p)^{n-j} j^{-1/2} + O( (np)^{-1} )
\end{align*}
\end{lemma}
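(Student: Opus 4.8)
The plan is to read both sides as binomial-weighted sums of the weight $i^{-1/2}$ and to reduce everything to the smallness of a single alternating sum. Let $X\sim\Bin(n,p)$. Reindexing the left-hand side by $i=2j$ and using $j^{-1/2}=\sqrt2\,i^{-1/2}$, it equals $\sqrt2\sum_{i\ \mathrm{even},\,i\ge2}\Pr[X=i]\,i^{-1/2}$, whereas the leading term on the right is $2^{-1/2}\sum_{i\ge1}\Pr[X=i]\,i^{-1/2}$. Writing $S_{\mathrm{even}}$ and $S_{\mathrm{odd}}$ for the sums of $\Pr[X=i]\,i^{-1/2}$ over even and odd $i\ge1$, a short computation (using $\sqrt2-2^{-1/2}=2^{-1/2}$) shows that the difference between the two sides equals $2^{-1/2}(S_{\mathrm{even}}-S_{\mathrm{odd}})=2^{-1/2}T$, where
\[ T=\sum_{i=1}^n(-1)^i\Pr[X=i]\,i^{-1/2}. \]
Hence it suffices to prove $|T|=O((np)^{-1})$; in fact I will obtain the much stronger bound $|T|\le np(1-p)^{n-1}$.

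To control $T$ I would convert the index weight $i^{-1/2}$ into an exponential so that the alternating sum collapses through the binomial theorem. For every integer $i\ge1$ one has $\int_0^\infty t^{-1/2}e^{-it}\,dt=\sqrt\pi\,i^{-1/2}$, and since the sum over $i$ is finite we may interchange it with the integral to get
\[ T=\frac{1}{\sqrt\pi}\int_0^\infty t^{-1/2}\sum_{i=1}^n(-1)^i\Pr[X=i]\,e^{-it}\,dt. \]
Completing the inner sum to $i=0$ and applying the binomial theorem gives $\sum_{i=0}^n\binom ni(-pe^{-t})^i(1-p)^{n-i}=(1-p-pe^{-t})^n$, so the bracketed inner sum equals $(1-p-pe^{-t})^n-(1-p)^n$.

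It then remains to estimate this integral, and here I would restrict to $p\le 1/2$, which is all the application requires, since there $p=2(1-\ell/k)k/n\le 2k/n\to0$. In this range $0\le 1-p-pe^{-t}\le 1-p<1$, so the mean value theorem applied to $x\mapsto x^n$ on the interval $[\,1-p-pe^{-t},\,1-p\,]$ yields $\bigl|(1-p-pe^{-t})^n-(1-p)^n\bigr|\le n(1-p)^{n-1}pe^{-t}$. Substituting this bound and using $\int_0^\infty t^{-1/2}e^{-t}\,dt=\sqrt\pi$ gives
\[ |T|\le np(1-p)^{n-1}\le e^{1/2}\,np\,e^{-np}. \]
Since $np\to\infty$, the right-hand side is $o((np)^{-1})$, which is more than enough; combined with the reduction of the first paragraph this proves the lemma.

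The main obstacle is the regime restriction rather than any single calculation: the parity imbalance $\Pr[X\ \mathrm{even}]-\Pr[X\ \mathrm{odd}]=(1-2p)^n$ that ultimately drives $T$ only decays when $p$ stays away from $1$ (for $p\to1$ one would additionally need $n(1-p)\to\infty$ fast enough), so the clean estimate genuinely needs $p$ bounded away from $1$; this is harmless here because the instances used have $p\to0$. The remaining points — the integral identity for $i^{-1/2}$ and the Fubini interchange — are routine, since the summation is finite and the integrand is absolutely integrable. I would also note that a purely discrete alternative works: $\Pr[X=i]\,i^{-1/2}$ is unimodal, so $|T|$ is at most twice its maximal term, and that maximum is $O((np)^{-1})$ by the local limit bound $\max_i\Pr[X=i]=O((np(1-p))^{-1/2})$ together with $i^{-1/2}=O((np)^{-1/2})$ on the bulk $i\gtrsim np$ and a Chernoff bound (\cref{Lem_ChernoffBound}) on the lower tail.
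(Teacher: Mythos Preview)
Your proof is correct and takes a genuinely different route from the paper. The paper's argument is purely discrete: it observes that $a_j=\sqrt{2}\,\Pr[X=j]\,j^{-1/2}$ is unimodal with peak near $j^*\approx np$, and then pairs each even-indexed term with an adjacent odd-indexed term on the appropriate side of the peak so that the even sum sandwiches one half of the full sum, the discrepancy being a bounded number of individual terms $a_j$, each of size $O((np)^{-1})$ by the local limit estimate. Your approach instead converts $i^{-1/2}$ into the Gamma integral $\pi^{-1/2}\!\int_0^\infty t^{-1/2}e^{-it}\,dt$ and collapses the alternating sum through the binomial theorem, which yields the exponentially small bound $|T|\le np(1-p)^{n-1}$ rather than merely $O((np)^{-1})$. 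The trade-off is that your clean estimate needs $p\le 1/2$ (so that $1-p-pe^{-t}\ge 0$ and the mean value argument applies directly), whereas the paper's unimodality pairing is insensitive to the location of $p$ in $(0,1)$; as you correctly note, the application only ever invokes the lemma with $p=2(1-\ell/k)k/n\to 0$, so the restriction is harmless. Your closing remark about the ``purely discrete alternative'' via unimodality and the local limit bound is in fact exactly the paper's own proof.
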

}
\begin{proof}
Let $\vX \sim \Bin_{\geq 1}(n, p)$ and define $a_j = \Pr \bc{\vX = j} / \sqrt{j/2}$ for $j=1 \dots n$. Then $$a_{j+1}/a_j = \bc{p/(1-p)}\bc{j / (j+1)^3}^{1/2} \bc{n - j}$$ is larger than $1$ up to $j^{\star} \in \cbc{ \floor{(n+1)p}, \floor{(n+1)p -1}}$, depending on $n$ being even or odd, and strictly less than $1$ for $j = j^{\star}+1,...,n$. Furthermore, $a_j = o(1)$ for every $j$. Define $j'$ as the largest even integer s.t.\ $ j' \leq j^{\star}$. Then
\begin{align}
    \sum_{j = 1}^{n/2} a_{2j} & \geq \frac{1}{2} \bc{\sum_{j = 1}^{j'/2} a_{2j} + a_{2j -1} + \sum_{j = j'/2 + 1}^{n/2 - 1} a_{2j} + a_{2j + 1}} \notag \\
    &= \bc{ \frac{1}{2} \sum_{j=1}^n a_j} + O((np)^{-2}), \notag 
\end{align}
The upper bound follows similarly, and together they imply the lemma. 
%
\end{proof}



We now prove the concentration results for the random regular pooling design.

\begin{proof}[Proof of \cref{lem_event_R_high_prob}]
\label{proof_properties_pooling_scheme}
Fix an index $i \in [n]$.
From the construction of $\G$ it follows that $\vec{\Delta}_i$ is distributed as $\Bin(mn/2, 1/n)$.
Then \cref{Lem_ChernoffBoundDKL} implies
\begin{align*}
   \Pr \bc{\vec{\Delta}_i > m/2 + O\bc{\sqrt{m} \log^2 n}} = n^{-\omega(1)}.
\end{align*}
Furthermore, the probability that an entry $x_i$ is contained in a specific query $a_j$ is given by
\begin{align*}
    p = 1 - \bc{1 - n^{-1}}^{\Gamma} = \bc{1+n^{-\Omega(1)}} \bc{1 - 1/\sqrt{e}}.
\end{align*}
Since queries select their participating entries independently of each other, we observe that $\vec{\Delta}^\star_i \sim \Bin \bc{m, p}$. Thus, \cref{Lem_ChernoffBoundDKL} implies
\begin{align*}
   \Pr \bc{\vec{\Delta}^\star_i > \bc{1-1/\sqrt{e}} m + O\bc{\sqrt{m} \log n}} = n^{-\omega(1)}.
\end{align*}
The union bound over all $n$ entries concludes the proof.
\end{proof}

\begin{proof}[Proof of \cref{small_overlap_part_function_simplify}]
Let $\vX \sim \Bin_{\geq 1}(\Gamma, 2 (1 - \ell/k)k/n)$. Then 
\begin{align}\label{expectation_upper}
\Erw[Z_{k,\ell}(\G,\vec{y})\mid \cR] \leq O(1) \cdot \binom{k}{\ell} \binom{n-k}{k-\ell} \bc{\frac{1}{2 \pi \Erw[\vec{X}]}}^{\frac m 2}
\end{align}
by \cref{small_overlap_part_function,Cor_JensenGap}.
We use the well known fact \cite{Asymptopia}
that as $n\to\infty$ we have for $p \in (0,1)$ that
\begin{align*}
n^{-1} \log \binom{n}{np} \to H(p) := -p\log (p)-(1-p) \log (1-p).
\end{align*}
We apply the $\log(\cdot)$ to \eqref{expectation_upper} and divide it by $n$. Then we calculate using $m = c k \log(n/k) \log^{-1}(k)$
\begin{align*}
    \MoveEqLeft n^{-1} \log\left(\Erw\left[Z_{k,\ell}(\G,\vec{y})\mid \cR  \right]\right)\notag \\
    &\leq (1+o(1))\Bigg(\frac kn H\bc{\frac \ell k}+\bc{1-\frac kn} H\bc{\frac{k-\ell}{n-k}} \notag\\
    &\quad-\frac{c k/n \log (n/k)}{2 \log k} \log\bc{2 \pi k \bc{1- \ell/k}}\Bigg).
    \qedhere
\end{align*}
\end{proof}

\mbox{}

\begin{proof}[Proof of \cref{Lem_small_overlap_partition_function_negative}]
Let $\gamma = 1 - \exp(-1/2)$ and recall  
\begin{align}
    \label{eq_recall} m = c k \log \bc{\frac {n}{k}} / \log k = c \frac{1 - \theta}{\theta} k ~ \text{and} ~ 0 \leq \ell \leq k - \gamma \log k 
\end{align}
for a constant $c>0$. Then define $f_{n,k}: [0, k - \gamma \log k] \to \RR$ as
\begin{align}
 \ell \mapsto \Bigg(\frac kn H\bc{\frac \ell k}+\bc{1-\frac kn} H\bc{\frac{k-\ell}{n-k}}\notag\\-\frac{c k/n \log (n/k)}{2 \log k} \log(2 \pi (1-\ell/k)k )\Bigg) \label{func_f}
\end{align}
and assume, as usual, $0 \log 0 = 0$.
By \cref{small_overlap_part_function_simplify} we get \[
n^{-1} \log\left(\Erw\left[Z_{k,\ell}(\G,\vec{y})\mid \G \right]\right) \leq (1 + o(1)) f_{n,k}(\ell). \] Expanding the entropy yields
{\small
\begin{align*}
     f_{n,k}(\ell) &=\frac 1n \Bigg(-\ell \log\bc{\frac \ell k} -\bc{k-\ell}\log\bc{1-\frac \ell k}\\
    &\quad-\bc{k-\ell} \log\bc{\frac{k-\ell}{n-k}}-\bc{n-2k+\ell} \log \bc{1-\frac{k-\ell}{n-k}}\\
    &\quad +\frac{c k \log (k/n)}{2 \log k} \log \bc{2 \pi k\bc{1-\frac \ell k}}\Bigg), \\
    f'_{n,k}(\ell) &= \frac 1 n \Bigg(-\log \bc{\frac \ell k} + \log\bc{1-\frac \ell k} + \log \bc{\frac{k-\ell}{n-k}}\\
    &\quad-\log\bc{1-\frac{k-\ell}{n-k}}-\frac{c k \log(k/n)}{2 (k-\ell) \log k }  \Bigg), \quad \text{and} \\
    f''_{n,k}(\ell) &= \frac 1 n \Bigg(-\frac 1\ell -\frac{2}{k-\ell}-\frac{1}{n-2k+\ell} - \frac{c k \log(k/n)}{2 \log k (k-\ell)^2} \Bigg).
\end{align*}
}
If $\ell = o(k)$ we get $\abs{\frac{1}{k - \ell} \bc{ 2 - \frac{c k (1 - \theta)}{2 \theta (k - \ell)} }} \ll \frac{1}{\ell}$ and therefore
\begin{align*}
    n f_{n,k}''(\ell) = - \frac{1}{\ell} - \frac{1}{n - 2k + \ell} - \frac{1}{k - \ell} \bc{ 2 - \frac{c k (1 - \theta)}{2 \theta (k - \ell)} } < 0.
    \end{align*}
This shows that $f_{n,k}'$ is monotonically decreasing in $\ell$ for large enough $n$. Furthermore, $f_{n,k}'$ is continuous on $(0,  k - \gamma \log k]$. Let $\tilde c > 0$ be an arbitrary constant. Then
\begin{align*}
    n f_{n,k}' \bc{\tilde c \frac{k^2}{n}} 
     & =- \log \bc{ \tilde c } + \frac{c (1 - \theta)}{\theta} + o(1) .
\end{align*}
This implies that there are $0 < \tilde c_1 < \tilde c_2 < \infty$ s.t.
\[ n f_{n,k}' \bc{\tilde c_1 \frac{k^2}{n}} > 0 \quad \text{and} \quad n f' \bc{\tilde c_2 \frac{k^2}{n}} < 0. \]
By the intermediate value theorem it follows that there is $\hat c \in [\tilde c_1, \tilde c_2]$ s.t. $\hat c \frac{k^2}{n}$ is the unique maximizer of $f_{n,k}$ for $\ell = o(k)$. Finally, by putting this value into \cref{func_f} we obtain that the highest order terms satisfy
\begin{align}
   n f_{n,k}\bc{ \hat c \frac{k^2}{n} } < 0 
    &\Longleftrightarrow     c > -2\frac{H(k/n)}{k/n \log (k/n)} = 2 + o(1). \label{eq_cond_c}
\end{align}
Furthermore, if $k - \gamma \log k \geq \ell = \Theta(k)$, we get \begin{align}
    \label{eq_linear} n f_{n,k}(\ell) = - \frac{c (1 - \theta)}{2 \theta} k \log (k) + O(k)
\end{align}
by definition, which is negative. Therefore, the lemma follows from \cref{eq_recall,eq_cond_c,eq_linear}.
\end{proof}

\end{document}